\documentclass{article}

 \usepackage{graphicx}

\usepackage{amsmath, amssymb, amsthm, enumerate}

\usepackage{amsmath,amssymb,amsthm, enumerate,mathrsfs}

\theoremstyle{definition}

\numberwithin{equation}{section}

\usepackage{eepic}
\usepackage{epic}
\usepackage{mathtools}
\usepackage{color}
\usepackage{float}
\usepackage{url} 
\usepackage{array}
\usepackage{xcolor}
\theoremstyle{plain}
\newtheorem{thm}{Theorem}[section]
\newtheorem{crl}[thm]{Corollary}

\newtheorem{prp}[thm]{Proposition}

\newtheorem{problem}{Open problem}
\theoremstyle{definition}
\newtheorem{dfn}[thm]{Definition}
\theoremstyle{remark}
\newcommand{\Mn}{\mathrm{Met}_n}
\newcommand{\Rn}{\mathrm{RMet}_n}
\newcommand{\RG}{\mathrm{RMet}(G)}
\newcommand{\MG}{\mathrm{Met}(G)}
\newcommand{\Cn}{\mathrm{Cut}_n}
\newcommand{\CG}{\mathrm{Cut}(G)}

\newcommand{\En}{\mathcal{E}_n}
\newcommand{\In}{\mathcal{I}_n}
\newcommand{\IG}{\mathcal{I}(G)}
\newcommand{\EG}{\mathcal{E}(G)}
\newcommand{\EK}{\mathcal{E}(K_n)}

\newcommand{\R}{\mathbb{R}}

\newcommand{\Vol}{\mathrm{vol}}

\begin{document}

\begin{center}
\large{\bf On the volumes of the elliptope, metric polytope, and cut polytope}
\end{center}\vspace{5mm}

\begin{center}
\textsc{David Avis, Luc Devroye, and Antoine Deza}
\end{center}

\vspace{2mm}

\footnotesize{
\noindent\begin{minipage}{14cm}
{\bf Abstract:}
In this paper, we investigate the relationships between the volumes of four
convex bodies:
the cut polytope, metric polytope, rooted metric polytope, and
elliptope, defined on graphs with $n$ vertices.
After an affine change of coordinates for the elliptope,
the cut polytope is contained in each of the other three, which,
for optimization purposes, provide polynomial-time relaxations.
It is therefore of interest to see how tight these relaxations are.
Worst-case ratio bounds are well known, but these are limited to
objective functions with non-negative coefficients. Volume ratios,
pioneered by Jon Lee with several co-authors,
give global bounds and are the subject of this paper. 
For the rooted metric polytope over
the complete graph, we show
that for large $n$ its volume is much greater than that of the elliptope.
For the metric polytope, for small
values of $n$, we show that its volume is smaller
than that of the elliptope; however, for large values,
we prove that the converse is true. Volume estimates place the
crossover near $n=13$.
We also give exact formulae for the volumes of several families of sparse cut polytopes.
In particular, we give an exact formula for the volume of the elliptope of every cycle
and show that its volume ratio with the corresponding cut polytope converges rapidly to one.
\end{minipage}
\phantom{i}\\
\noindent{\bf Keywords:} {Volume, Cut polytope, Metric polytope, Elliptope}\\
\noindent{\bf Mathematics Subject Classification:} {90-02}

\hbox to14cm{\hrulefill}\par
}



\section{Introduction and literature review}
In this paper, we investigate the relationships between four
convex bodies:
the cut polytope, metric polytope, rooted metric polytope and elliptope.
Formal definitions are given in the next section, and a general reference is the book by Deza and Laurent \cite{DL97}.
For volume comparisons we use the affine image of the elliptope
defined in the next section, so that all four bodies are expressed
in the same coordinates.
The metric polytope, rooted metric polytope, and elliptope all contain the cut polytope
and hence, for optimization purposes, provide relaxations. 
The rooted metric polytope contains the metric polytope.
In general, neither is comparable by inclusion with the elliptope.
The metric polytope and rooted metric polytope are linear relaxations defined by the triangle inequalities, all
of which are facets of the cut polytope. The elliptope provides a
semidefinite relaxation that contains all vertices of the cut polytope
as extreme points. Optimization over these three bodies can be performed
in polynomial time, whereas optimization is NP-hard in general for the cut polytope.
The goal of this paper is to compare how well the three relaxations
approximate the cut polytope in terms of their respective
volumes. 

We denote the volume of a closed, bounded convex set $K$ as $\Vol(K)$.
Consider polytope $P \subseteq \mathbb{R}^n$ and closed bounded convex set $Q \subseteq \mathbb{R}^n$
such that $P \subseteq Q$. 
Such pairs of convex sets appear frequently in cases where optimization
over $P$ is NP-hard, but can be done in polynomial time over $Q$.
In this case, $Q$ gives a polynomial-time computable relaxation of
the NP-hard problem.
Two measures of
how closely $Q$ approximates $P$ are
\[
\frac{\Vol(Q)}{\Vol(P)}
~~~~~~\text{and}~~~~~~
\frac{\max\{cx:x\in Q\}}{\max\{cx:x\in P\}}.
\]
The worst-case ratio bound on the right requires objective functions
with non-negative coefficients, to avoid a zero in the denominator,
whereas the volume ratio bound is global.

We will be considering the case where $P=\Cn$, the 
cut polytope, and either $Q=\Mn$ the metric polytope,
$Q=\Rn$, the rooted metric polytope, or $Q=\In$ the elliptope.
For $n=3$, these relationships are shown in Figure \ref{C3}.
\begin{figure}
\centering
\includegraphics{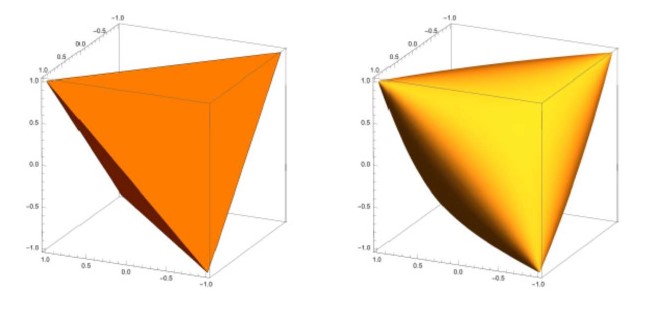}
\caption{$\mathrm{Cut}_3=\mathrm{Met}_3$~~~~~~~~~~~~~~~~~~Elliptope $I_3$}
\label{C3}
\end{figure}
For this special case, $\mathrm{Cut}_3=\mathrm{Met}_3=\mathrm{RMet}_3$ is shown on the left.
The elliptope is shown on the right, and the two bounds above are: 
\[
\frac{\Vol(I_3)}{\Vol(\mathrm{Cut}_3)} \approx 1.851
~~~~~~\text{and}~~~~~~
\frac{\max\{cx:x\in I_3\}}{\max\{cx:x\in \mathrm{Cut}_3\}}
\le 1.125,
\qquad c_j\ge0,\ \forall j,\ c\ne0.
\]

The general 
worst-case ratio bounds with a non-negative objective function are
$2-\epsilon  ~~$ when $Q=\Mn$ (Poljak and Tuza \cite{PT94})
versus
$1.139~$ when $Q=\In$ (Goemans and Williamson \cite{GW95}).
We will be considering how to obtain volumes for these two
bodies in this paper to obtain general volume bounds.
 
The use of volumes to compare combinatorial polytopes, test the strength
of constraints, and study other polyhedral properties was pioneered by Jon Lee
in a series of papers with various coauthors. Lee and Morris \cite{LM94} consider a volume ratio bound similar to that above except scaled
by taking it to the power $1/d$, where $d$ is the dimension.\footnote{ We prefer not to scale by $1/d$ as we can
often compute volumes exactly in rational arithmetic.} Ko, Lee and Steingrimsson
\cite{KLS97} study this bound applied to the 
{\em boolean quadric polytope} (BQP), 
otherwise known as the {\em correlation polytope} and defined 
in Section \ref{def}.
It is related to the cut polytope by a linear bijection known as the covariance map.
They consider a relaxation of BQP to the rooted metric polytope $\Rn$, which is weaker
than the metric polytope as it contains only $O(n^2)$ of the $O(n^3)$
facets defining $\Mn$. They proved the following:
\begin{thm}[Ko, Lee and Steingrimsson \cite{KLS97}]
\label{KLS}
For $n \ge 2$,
\begin{equation*}
\Vol(\mathrm{RMet}_{n+1}) = \frac{2^n n!}{(2n)!},~~~~\mathrm{hence}~~~~\Vol(\mathrm{RMet}_{n+2})= \frac{\Vol(\mathrm{RMet}_{n+1})}{2n+1}.
\end{equation*}
\end{thm}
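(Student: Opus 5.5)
We compute the volume directly in cut coordinates by integrating out the non-root coordinates first. Label the root vertex $n+1$. Write $d_i=d(i,n+1)$ for $i=1,\dots,n$ and $d_{ij}$ for $1\le i<j\le n$. Recall that $\mathrm{RMet}_{n+1}$ is cut out by the triangle inequalities on the triangles $\{i,j,n+1\}$ containing the root:
\[
|d_i-d_j|\le d_{ij}\le d_i+d_j,\qquad d_i+d_j+d_{ij}\le 2 .
\]
The key structural observation is that each $d_{ij}$ appears only in the inequalities of its own triangle $\{i,j,n+1\}$. Hence, once $d=(d_1,\dots,d_n)$ is fixed, the coordinates $d_{ij}$ vary independently. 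Each ranges over the interval
\[
\bigl[\,|d_i-d_j|,\ \min(d_i+d_j,\,2-d_i-d_j)\,\bigr].
\]
Nonemptiness of these intervals forces $d\in[0,1]^n$. A short case check shows that the interval for $d_{ij}$ has length $2\min(u_i,u_j)$, where $u_i=\min(d_i,1-d_i)$. By Fubini,
\[
\Vol(\mathrm{RMet}_{n+1})=\int_{[0,1]^n}\prod_{1\le i<j\le n}2\min(u_i,u_j)\,dd .
\]

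Next we simplify the integral. The map $d_i\mapsto u_i$ is two-to-one from $[0,1]$ onto $[0,1/2]$ and measure preserving on each half, so the integral equals $2^n$ times the same integral over $u\in[0,1/2]^n$. Rescaling $u=v/2$ contributes a factor $2^{-n}$ from $du$. It also contributes $2^{-\binom n2}$ from the $\binom n2$ minima, which cancels the factors of $2$ in the integrand. Therefore
\[
\Vol(\mathrm{RMet}_{n+1})=\int_{[0,1]^n}\prod_{i<j}\min(v_i,v_j)\,dv .
\]
By symmetry this is $n!$ times the integral over the simplex $0<v_1<\dots<v_n<1$. On that simplex the integrand is $\prod_{k=1}^n v_k^{\,n-k}$.

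The remaining step is the iterated integral, which I expect to be the only real computation. Integrating out $v_1,v_2,\dots$ in turn, one proves by induction that after $k$ steps the partial result is $v_{k+1}^{e_k}/(e_1\cdots e_k)$, where
\[
e_k=\sum_{j=1}^k(n-j+1)=\frac{k(2n-k+1)}{2}.
\]
The inductive step multiplies by $v_{k+1}^{\,n-k-1}$ and integrates from $0$ to $v_{k+2}$. The final integral over $v_n\in[0,1]$ gives $1/\prod_{k=1}^n e_k$. Now
\[
\prod_{k=1}^n \frac{k(2n-k+1)}{2}=\frac{n!}{2^n}\cdot\frac{(2n)!}{n!}=\frac{(2n)!}{2^n},
\]
so $\Vol(\mathrm{RMet}_{n+1})=n!\,2^n/(2n)!$. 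As a sanity check, this gives $1/3$ for $n=2$ and $1/15$ for $n=3$.

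The recursion follows by dividing the formula at $n+1$ by the formula at $n$:
\[
\frac{2^{n+1}(n+1)!\,(2n)!}{(2n+2)!\,2^n n!}=\frac{2(n+1)}{(2n+2)(2n+1)}=\frac1{2n+1}.
\]
The main care needed is in the interval-length case analysis and in tracking the powers of $2$ from the changes of variables.
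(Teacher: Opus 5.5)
Your proof is correct and complete. I checked the following:
\begin{itemize}
\item the two cases of the fibre length: $d_i+d_j\le1$ gives $2\min(d_i,d_j)$ and $d_i+d_j\ge1$ gives $2\min(1-d_i,1-d_j)$, both equal to $2\min(u_i,u_j)$;
\item the powers of $2$ from folding and rescaling, which do cancel;
\item the product $\prod_{k=1}^n e_k=(2n)!/2^n$.
\end{itemize}
The values $1/3,1/15,1/105$ agree with Table~\ref{tab:volumes}.

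The paper itself gives no proof; it quotes \cite{KLS97}, whose setting is the boolean quadric relaxation, and the covariance map carries that relaxation onto $\mathrm{RMet}_{n+1}$. So your argument is a genuinely different route: a self-contained replacement for the citation, working directly in cut coordinates. The two settings are equivalent. In correlation coordinates the constraints on $y_{ij}$ read $\max(0,y_{ii}+y_{jj}-1)\le y_{ij}\le\min(y_{ii},y_{jj})$, an interval of length $\min(u_i,u_j)$. Your extra factor $2$ per non-root edge is exactly the Jacobian $2^{\binom n2}$ of the covariance map.

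What your formulation buys is visibility of the only structural input: every non-root edge lies in exactly one constraint triangle. That gives more than the stated formula. Take a forest $F$ with at least one edge. The induced cycles of $\nabla F$ are exactly the triangles through the apex. By Theorem~\ref{seymour}, $\mathrm{Cut}(\nabla F)$ is therefore cut out by the box and rooted triangle inequalities, and your computation yields
\[
\Vol(\mathrm{Cut}(\nabla F))=\int_{[0,1]^{V(F)}}\prod_{ij\in E(F)}\min(v_i,v_j)\,dv .
\]
For the star $S_{n+1}$ this is $\int_0^1(t-t^2/2)^n\,dt=2^n(n!)^2/(2n+1)!$.

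For the path $P_{n+1}$, read $\min(v_k,v_{k+1})$ as the probability that an independent uniform $w_k$ lies below both. The integral is then the probability that $v_1>w_1<v_2>\cdots>w_n<v_{n+1}$, namely $A_{2n+1}/(2n+1)!$. So your method recovers the Lee--Skipper entries of Table~\ref{LS}. It also gives an explicit integral for the forest entry listed there as open.
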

The work of Lee and Skipper \cite{LS20} and Lee, Skipper and Speakman \cite{LSS18}, who
compute volumes of sparse BQP, is closely related to ours. We give the relevant results in
Section \ref{exact}.
For small values of $n$, it is possible to compute
the volumes of $\Cn$ and $\Mn$ exactly\footnote{Using {\em lrslib v7.3} at
\url{https://cgm.cs.mcgill.ca/~avis/C/lrs.html} and Normaliz \url{https://www.normaliz.uni-osnabrueck.de/}}
and to high precision for $\In$.\footnote{Using the {\em fungible} package
at \url{https://CRAN.R-project.org}}
These results are given in Table \ref{tab:volumes}, where Theorem \ref{KLS} is used for $\Vol(\Rn)$.
The volume of $\Cn$ for $n \le 6$ was previously
computed by Deza, Deza and Fukuda \cite{DDF95}.
In contrast to the worst-case ratio bounds cited above, for $n \le 7$ the
metric polytope gives a much tighter volume relaxation than the elliptope. 
For $n=6,7$, $\Rn$ gives a weaker bound than $\In$.

\begin{table}[htb]
\caption{Volumes and ratios for small $n$.}
\label{tab:volumes}
\setlength{\extrarowheight}{5pt}
\begin{tabular}{||c|c|c|c|c|c|c|c||}
\hline
n &$\Vol(\In)$  & $\Vol(\Cn)$ &$\Vol(\Mn)$& $\Vol(\Rn)$& $\frac{\Vol(I_n)}{\Vol(\Cn)}$&  $\frac{\Vol(\Mn)}{\Vol(\Cn)}$& $\frac{\Vol(\Rn)}{\Vol(\Cn)}$\\
\hline
2 & 1 & 1 & 1 & 1& 1& 1 &1 \\
3 & 0.617    &     1/3       & 1/3&  1/3 &   1.85 &    1 &1 \\
4 & 0.183    &     2/45      & 2/45&  1/15    &       4.11   &  1 & 1.5\\
5 & 0.022   &    32/14175   &     4/1701  &1/105 &         9.75   & 1.04 & 4.22 \\
6 & 9.50e-04 & $\frac{2384}{58046625}$ &    $\frac{71936}{1477701225}$ &1/945&  23.1  &    1.19&25.8 \\
7 & 1.33e-05 & $\frac{637888}{2531123083125}$& $\frac{3586321206047}{9206753675223604500}$&1/10395& 52.7 &      1.55& 382 \\
\hline
\end{tabular}
\end{table}

Our main asymptotic results show a  separation between these
relaxations. We prove that
\[
\Vol(\In)
=
\exp\left(
-\frac{n^2\log n}{4}
+\Theta(n^2)
\right),
\]
and that the cut polytope has the same leading asymptotic behaviour:
\[
\Vol(\Cn)
=
\exp\left(
-\frac{n^2\log n}{4}
+\Theta(n^2)
\right).
\]
For the metric polytope, we prove that there exists a constant
$
\lambda_{\mathrm{Met}}\in[0.7027,\log 3]
$
such that
\[
\log\Vol(\Mn)
=
-\lambda_{\mathrm{Met}}\binom{n}{2}
+o(n^2).
\]
Consequently,
\[
\frac{\Vol(\In)}{\Vol(\Mn)}
\longrightarrow 0,
\]
so asymptotically the elliptope gives a much tighter volume relaxation
than the metric polytope. We also show that the rooted metric polytope
is asymptotically much larger than the elliptope. The volume estimates
in Section \ref{estimation} place the crossover between $\Mn$ and $\In$
near $n=13$.
 
The convex bodies described so far are defined in $\mathbb{R}^{\binom{n}{2}}$,
the coordinates of which can be conveniently labelled as edges of
the complete graph $K_n$. However, it is also of interest to study the projections
of these bodies onto the edges of an arbitrary graph $G$.
In the case of the cut polytope, we can give explicit volume
formulae for several classes of sparse graphs.

The paper is organized as follows.
In Section \ref{def}, we provide formal definitions for the convex bodies.
Then, in Section \ref{exact}, we give exact formulae for
the volume of the cut polytope for various sparse graphs
with no $K_5$-minor and an exact formula for the volume
of the elliptope of a cycle.
Sections \ref{asymptotic} and \ref{asymptotic2} contain the
asymptotic results stated above.
In Section \ref{estimation} we use the Volesti \cite{CF21}
package to estimate the volume of $\Mn$ for $8 \le n \le 25$.
In the final section, we give some general conclusions and state
some open problems.

\section{Definitions}
\label{def}

We begin by defining the convex bodies of interest in this paper.
The standard reference is Deza and Laurent \cite{DL97}.
Unless otherwise stated, let $n \ge 3$, and let $[n]$ denote $\{1,\ldots,n\}$.
The cut polytope is given by a vertex representation, whereas the
elliptope is given by a positive-semidefinite Gram representation.
\noindent
\begin{dfn}[Cut polytope $\CG$ ]
\label{Cdef}

Let $G(V, E)$ be a finite undirected graph with $V=[n]$.
For any $S \subseteq [n-1] $,
we define the
{\em cut vector} $x^S \in \R^{|E|}$ by

\begin{eqnarray*}
x^S_{ij}= \left\{ \begin{array}{ll}
   1 & {~~~~ | \{i,j\} \cap S | = 1 }   \\ 
   0 & {~~~~otherwise}
                    \end{array} \right.
~  \forall (i,j) \in E
\end{eqnarray*}
and
\[
\CG= CH \{x^S : \forall ~ S \subseteq [n-1]\} \subset \R^{|E|}.
\]
We abbreviate $\mathrm{Cut}(K_n)$ to $\Cn$.
\qed
\end{dfn}
\noindent

\begin{dfn}[Elliptope $\EG$]
\noindent
Let $G=(V,E)$ be a graph with $V=[n]$, let $Y=[ y_{ij}]$ be a
$n \times n$ matrix and let $y^E =  (y_{ij} : (i,j) \in E ) \in \R^{|E|}$. 
\[
    \EG=\{y^E : Y=[ y_{ij}],~  n \times n,~\text{symmetric,~positive semidefinite},~ y_{ii}=1, i=1,\ldots,n \} \subset \R^{|E|}
\]
We abbreviate $\EK$ to $\En$.
\qed
\end{dfn}

The metric and rooted metric polytopes are defined by systems of inequalities known
as $H$-representations.
First, let us define them for the complete graph $K_n$.
\begin{dfn}[Metric polytope $\Mn$] 
\label{Mdef}
$\Mn$ 
is the set of $x \in \R^{\binom{n}{2}}$ satisfying the
$4 \binom{n}{3}$ triangle inequalities:
\begin{equation}
\label{triangle}
\begin{aligned}
x_{ij} &\le x_{ik}+x_{jk},\\
x_{ik} &\le x_{ij}+x_{jk},\\
x_{jk} &\le x_{ij}+x_{ik},\\
x_{ij}+x_{ik}+x_{jk} &\le 2
\end{aligned}
\end{equation}
for all $1 \le i < j < k \le n$
\qed
\end{dfn}   
\noindent
\begin{dfn}[Rooted metric polytope $\Rn$]
\label{RMdef}
$\Rn$
is the set of $x \in \R^{\binom{n}{2}}$ satisfying the
$4 \binom{n-1}{2}$ triangle inequalities
\[
\begin{aligned}
x_{ij} &\le x_{in}+x_{jn},\\
x_{in} &\le x_{ij}+x_{jn},\\
x_{jn} &\le x_{ij}+x_{in},\\
x_{ij}+x_{in}+x_{jn} &\le 2
\end{aligned}
\]
for all $1 \le i < j \le n-1$
\qed
\end{dfn}
\noindent
$\mathrm{Met}_2$ and $\mathrm{RMet}_2$ are defined by the segment
$0 \le x_{12} \le 1$.
Given a graph $G=(V,E)$ on $n \ge 2$ vertices we denote the projection
of $\Mn$ (resp., $\Rn$) onto the variables indexed by $E$ as
$\MG$ (resp., $\RG$).
The cut vectors defining $\CG$ are vertices of both $\MG$ and $\RG$.
They are the only integer vertices of $\MG$.
For a direct comparison of $\EG$ (expressed in coordinates $y_{ij}$)
with the cut and metric
polytopes (expressed in coordinates $x_{ij}$) we transform $\EG$
into $\IG$ by the mapping
\begin{equation}
\label{E2I}
x_{ij}=\frac{1-y_{ij}}{2},~~~~(i,j)\in E.
\end{equation}
This implies that
\begin{equation}
\label{VE2I}
\Vol (\EG) = 2^{|E|} \Vol(\IG ).
\end{equation}
We will also use the coordinatewise angular map
\begin{equation}
\label{angularmap}
\Phi_G(x)_{ij}
=
\frac{1}{\pi}\arccos(1-2x_{ij}),\qquad (i,j)\in E.
\end{equation}
The map is injective, with inverse on its image
\[
\Phi_G^{-1}(a)_{ij}
=
\frac{1-\cos(\pi a_{ij})}{2}.
\]
$\In=\mathcal{I}(K_n)$ has an $H$-representation, albeit with
a countable number of inequalities (see \cite{DL97}, p. 459):
\begin{eqnarray*}
\In = \left \{ x \in \mathbb{R}^{\binom{n}{2}}~\Bigg|~     \sum_{1 \le i < j \le n} b_i b_j x_{ij} \le \frac{1}{4} \left(\sum_{i=1}^n b_i \right) ^2~~\forall b \in \mathbb{Z}^n \right\} .
\end{eqnarray*}
An $H$-representation of $\IG$ can be obtained from this by projection.

It is known that $\CG  \subset \IG$. Although $\MG$
and  $\IG$ are not comparable in general, their intersection 
contains $\CG$.
All of these convex bodies
are full-dimensional.

We conclude this section by introducing a bijection of
the cut polytope, which we will need in the next section.
\begin{dfn}[Correlation polytope $\text{Cor}(G)$]
Let $G=(V, E)$ be a finite undirected graph with $n$ vertices.
For $S \subseteq [n]$, the {\em correlation vector}
\begin{equation*}
y^S=\{y_{11},y_{22},\ldots,y_{nn},~y_{ij}~ (i,j)\in E \} \in \R^{|V|+|E|}
\end{equation*}
is given by:
\begin{eqnarray*}
~~~~~~ y^S_{ij}= \left\{ \begin{array}{ll}
   1 &  i=j \in S ~~or~~ (i,j) \in E,~\{i,j\} \subseteq S,\\
   0 & \text{otherwise,}
                    \end{array} \right.
\end{eqnarray*}
and 
\[
\text{Cor}(G)= CH \{y^S : \forall S \subseteq [n]~\}~\subset~\R^{|V|+|E|}.
\]
\end{dfn} 

The {\em suspension} $\nabla(G)$ of a graph $G$ with $n$
nodes is formed by adding vertex $n+1$
and edges $i,n+1$ for $i \in V$ as illustrated in Figure \ref{susp}.

\begin{figure} [H]
\centering
\includegraphics[scale=0.4]{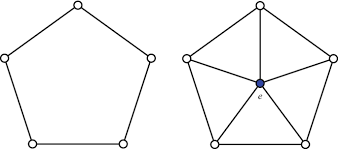}
\caption{The cycle $C_5$ and its suspension $\nabla(C_5)=W_6$.}
\label{susp}
\end{figure}

The cut and correlation polytopes are related by the 
covariance map: 
\[
\begin{aligned}
 x_{i,n+1}&=y_{ii} &&1 \le i \le n,\\
 x_{ij}&=y_{ii}+y_{jj}-2y_{ij}&&(i,j)\in E.
\end{aligned}
\]

It is easy to show (e.g., see \cite{DL97}, Section 5.2) that
\[
\text{Cor}(G) \cong \mathrm{Cut}(\nabla(G)).
\]
Moreover, the covariance map has determinant of absolute value
$2^{|E|}$, and hence
\[
\Vol(\mathrm{Cut}(\nabla G))
=2^{|E|}\Vol(\mathrm{Cor}(G)).
\]

\section{Exact volumes for
some graphs with no $K_5 $ minor}
\label{exact}

We will need the following result, which was proved 
for the cut cone by Seymour, and later extended to the cut
polytope by Barahona and Mahjoub (\cite{DL97}, Theorem 27.3.6).
\begin{thm}
\label{seymour}
$\CG = \MG$ if and only if $G$ contains no $K_5$ minor. \\
In this case the polytope is described by the box inequalities
$0\le x_e\le 1$ and the cycle inequalities
\begin{equation*}
\sum_{e \in F} x_e -\sum_{e\in C \setminus F} x_e
\leq |F|-1,~~~|F|~odd, F \subseteq C
\end{equation*}
for every induced cycle $C$ in $G$.
\qed
\end{thm}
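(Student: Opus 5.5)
The statement is Theorem \ref{seymour}: the theorem of Seymour, extended by Barahona and Mahjoub, that $\CG=\MG$ exactly when $G$ has no $K_5$ minor, with the explicit cycle description in that case. I would prove the two directions separately, working throughout with the polytope $P(G)$ defined by the box inequalities and the cycle inequalities $\sum_{e\in F}x_e-\sum_{e\in C\setminus F}x_e\le |F|-1$ ($|F|$ odd). Three inclusions are routine. Every cut vector satisfies all cycle inequalities, since a cut meets each cycle in an even number of edges, so $\CG\subseteq P(G)$. Each cycle inequality is obtained by summing triangle inequalities along a triangulation of the cycle (a fan from one vertex in $K_n$) and then projecting, so $\MG\subseteq P(G)$. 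Finally $\CG\subseteq\MG$. Hence in every direction the content lies in the equality $P(G)=\CG$.

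\emph{Necessity.} Suppose $G$ has a $K_5$ minor. For $K_5$ itself, the point with all coordinates $2/3$ lies in $\mathrm{Met}_5$ but violates the pentagonal hypermetric inequality $\sum_{ij}b_ib_jx_{ij}\le 0$ with $b=(1,1,1,-1,-1)$, so it is not in $\mathrm{Cut}_5$. I would then check that the property $\CG=\MG$ is preserved under deleting and contracting edges. For deletion, this is compatibility of projections. For contraction, restrict to the face $x_e=0$ and identify the endpoints of $e$. It follows that a $K_5$ minor forces $\CG\ne\MG$.

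\emph{Sufficiency.} Here I would use Wagner's theorem: a graph with no $K_5$ minor is built by clique sums (of order at most $3$) from planar graphs and copies of the Wagner graph $V_8$. The argument then has three steps.
\begin{enumerate}[(i)]
\item For planar $G$, show $P(G)=\CG$. I would argue via planar duality: cuts of $G$ correspond to even subgraphs (cycle spaces) of $G^*$, and the cycle inequalities become the description of the $T$-join or cycle polytope of $G^*$. Alternatively, I would reproduce Barahona's proof using the integrality of the max-cut linear program on planar graphs, which comes from the total dual integrality of the $T$-join system.
\item Verify $V_8$ directly. It is a small case and can be settled by computation.
\item Show that the cycle description is preserved under $k$-sums for $k\le 3$. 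For $k\le 2$, the cut polytope of the sum is the fibre product of the two pieces over the shared cut polytope of $K_k$, which is a simplex. For $k=3$, I would use that $\mathrm{Cut}_3=\mathrm{Met}_3$ is a simplex, so any point in the shared coordinates has a unique convex representation. This lets convex combinations from the two sides be glued.
\end{enumerate}

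\emph{Inducedness.} To see that it suffices to use induced cycles, I would show that a cycle with a chord gives an inequality that is the sum of inequalities from the two shorter cycles through the chord, with the parities chosen appropriately.

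I expect the planar case (step (i)) to be the main obstacle, since it carries the genuine combinatorial content. The clique-sum gluing in step (iii) is the delicate technical step.
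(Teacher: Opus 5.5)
The paper gives no proof of this theorem. It quotes it from Deza--Laurent (Seymour for the cut cone, Barahona--Mahjoub for the polytope), and the explicit description amounts to knowing that $\MG$ itself is cut out by the box and cycle inequalities (a theorem of Barahona). Your plan is essentially Barahona's decomposition route: Wagner's theorem, planar duality to the cycle/$T$-join polytope, a check of $V_8$, and gluing over the simplex $\mathrm{Cut}(K_k)$ for $k\le 3$. Several pieces are sound: the inclusions $\CG\subseteq\MG\subseteq P(G)$, the chord-splitting argument for induced cycles, and the minor-closure argument. The proposal has two concrete defects, however.

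\emph{The certificate for $K_5$ is wrong.} For $b=(1,1,1,-1,-1)$ one has $\sum_{i<j}b_ib_j=-2$, so the all-$\frac23$ point gives $-\frac43\le 0$ and \emph{satisfies} the pentagonal inequality. That point is still outside $\mathrm{Cut}_5$. It violates the switched pentagonal inequality $\sum_{i<j}x_{ij}\le 6$ (the maximum cut of $K_5$), since its coordinate sum is $\frac{20}{3}$. Alternatively, $x=\frac13 d_{K_{2,3}}$ (value $\frac23$ within $\{1,2,3\}$ and within $\{4,5\}$, $\frac13$ across) lies in $\mathrm{Met}_5$ and violates the pentagonal inequality by $\frac23$. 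In the contraction step you should also say why a point of $\mathrm{Met}(G/e)$ lifts into $\MG\cap\{x_e=0\}$: extend it to $\mathrm{Met}_{n-1}$ and duplicate the contracted vertex.

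\emph{The sufficiency argument does not reach arbitrary $K_5$-minor-free graphs.} This is the more serious gap. Wagner's theorem gives clique sums with the clique edges retained only for edge-maximal $K_5$-minor-free graphs; in general $G$ is just a spanning subgraph of such a graph. For example, $K_{3,3}$ is $3$-connected, triangle-free and nonplanar. It arises only from three copies of $K_4$ glued along a triangle, after deleting that triangle.

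Your fibre-product gluing needs the clique edges to be coordinates of $G$, so that $x|_{E(G_i)}\in P(G_i)$. Applied to an edge-maximal supergraph $H$ on the same vertex set, it gives $\mathrm{Cut}(H)=P(H)=\mathrm{Met}(H)$, and hence, by projection, $\CG=\MG$. But the second assertion, $\CG=P(G)$, needs $P(G)=\mathrm{proj}_{E(G)}P(H)$, that is, $\MG=P(G)$. Your routine inclusions only give $\MG\subseteq P(G)$. For the same reason, your remark that necessity reduces to $P(G)\ne\CG$ is inaccurate.

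What is missing is a one-edge extension lemma. Take $x\in P(G)$ and a non-edge $uv$. Give each edge $f$ cost $1-x_f$ when labelled odd and $x_f$ when labelled even. Let $d_{\mathrm{odd}}$ and $d_{\mathrm{even}}$ be the minimum costs of $u$--$v$ paths with an odd, respectively even, number of odd labels (taken as $\infty$ if no path exists).
\begin{itemize}
\item The box and cycle inequalities say exactly that every closed walk with an odd number of odd labels costs at least $1$. To see this, split the walk into labelled cycles and back-and-forth pairs.
\item Concatenating an odd and an even $u$--$v$ path then gives $1-d_{\mathrm{odd}}\le d_{\mathrm{even}}$.
\item Any value $x_{uv}\in[\max(0,1-d_{\mathrm{odd}}),\min(1,d_{\mathrm{even}})]$ satisfies every new cycle inequality through $uv$.
\end{itemize}
Iterating up to $K_n$, where $P(K_n)=\mathrm{Met}_n$, yields $\MG=P(G)$, and your argument then closes.
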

\noindent
If $|C|=3$, then these cycle inequalities are just the triangle
inequalities (\ref{triangle}).

The {\em (prism) product} of bounded convex sets $P \subseteq \R^r$
and $Q \subseteq \R^s$ 
is given by
\[
P \times Q =  \{ (p,q): p \in P, q\in Q \} \subseteq{\R^{r+s}}.
\]
It is well-known that 
\begin{equation}
\label{volprod}
\Vol(P \times Q) = \Vol(P) \Vol(Q).
\end{equation}
Further if $P$ and $Q$ are polytopes
with vertex sets $V(P)$ and $V(Q)$, respectively, 
then
\begin{equation}
\label{vprod}
P \times Q = CH \left\{ \binom{x}{y} : x \in V(P), y \in V(Q) \right\}.
\end{equation}
\begin{prp}
\label{union}
If graphs $G_1$ and $G_2$ are disjoint or
share one vertex, then
\begin{equation}
\label{uprod}
\mathrm{Cut}(G_1 \cup G_2 ) = \mathrm{Cut}(G_1) \times \mathrm{Cut} (G_2) \\ 
\end{equation}
and
\begin{equation}
\label{uvolprod}
\Vol(\mathrm{Cut}(G_1 \cup G_2))
=
\Vol(\mathrm{Cut}(G_1)) \Vol(\mathrm{Cut}(G_2)).
\end{equation}
\end{prp}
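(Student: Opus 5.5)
The plan is to show that the cut vectors of $G_1\cup G_2$ are exactly the pairs (cut vector of $G_1$, cut vector of $G_2$). Then (\ref{vprod}) gives (\ref{uprod}), and (\ref{volprod}) gives (\ref{uvolprod}).

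Since $E(G_1)$ and $E(G_2)$ are disjoint, order the coordinates of $\R^{|E(G_1\cup G_2)|}$ as $(x^{(1)},x^{(2)})$, with $x^{(1)}$ indexed by $E(G_1)$ and $x^{(2)}$ by $E(G_2)$. For a vertex subset $S$ of $G_1\cup G_2$, the cut vector $x^S$ restricted to $E(G_i)$ is the cut vector of $S\cap V(G_i)$ in $G_i$. This holds because whether an edge $ij$ is cut depends only on $|\{i,j\}\cap S|$.

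The definition uses $S\subseteq[n-1]$, that is, one root vertex is never in $S$. This normalization is harmless: $x^S=x^{V\setminus S}$, so the set of cut vectors equals $\{x^S : S\subseteq V\}$ for an arbitrary choice of root, and I will work with all subsets $S\subseteq V$. The restriction map $S\mapsto (S\cap V(G_1), S\cap V(G_2))$ therefore sends cut vectors of $G_1\cup G_2$ into $V(\mathrm{Cut}(G_1))\times V(\mathrm{Cut}(G_2))$.

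The one real step is surjectivity: given $S_1\subseteq V(G_1)$ and $S_2\subseteq V(G_2)$, I need a single $S$ realizing both cut vectors.
\begin{itemize}
\item If the graphs are disjoint, take $S=S_1\cup S_2$.
\item If they share a vertex $v$, replace $S_i$ by its complement in $V(G_i)$ if necessary so that $v\notin S_1$ and $v\notin S_2$. Complementing does not change the cut vector. Then $S=S_1\cup S_2$ satisfies $S\cap V(G_i)=S_i$, so it works.
\end{itemize}
This gives equality of the vertex sets, or at least of the generating sets: every cut vector is a vertex of the cut polytope, and in any case the convex hulls agree. Hence
\[
\mathrm{Cut}(G_1\cup G_2)=CH\left\{\binom{x}{y}: x\in V(\mathrm{Cut}(G_1)),\ y\in V(\mathrm{Cut}(G_2))\right\}=\mathrm{Cut}(G_1)\times\mathrm{Cut}(G_2)
\]
by (\ref{vprod}). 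Applying (\ref{volprod}) then yields (\ref{uvolprod}).

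The only care needed is the shared-vertex compatibility in the surjectivity step, which the complementation symmetry handles. Note also that the argument needs each $G_i$ to contain all its edges in $G_1\cup G_2$, which holds by definition of the union: no edge joins $V(G_1)\setminus V(G_2)$ to $V(G_2)\setminus V(G_1)$.
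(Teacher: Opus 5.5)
Your proof is correct and follows essentially the same route as the paper. Both identify the cut vectors of $G_1\cup G_2$ with concatenations of cut vectors of $G_1$ and $G_2$, then apply (\ref{vprod}) and (\ref{volprod}); the only difference is that the paper labels the shared vertex as the root $n$ so it never lies in $S$, whereas you reach the same normalization through the symmetry $x^S=x^{V\setminus S}$.
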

\begin{proof}
Let $G=G_1 \cup G_2$ and suppose it has $n$ vertices. If $G_1$ and $G_2$ share a vertex,
label this vertex $n$. Consider any cut $S \subseteq [n-1]$, then it
decomposes into disjoint cutsets $S_1$ for $G_1$ and $S_2$ for $G_2$.
The cut vector $x^S$ decomposes into  cut vectors $x^{S_1}$ for $G_1$ and
$x^{S_2}$ for $G_2$. Conversely 
concatenating cut vectors $x^{S_1}$ for $G_1$ and
$x^{S_2}$ for $G_2$ give a cut vector $x^S$ for $G$.
Therefore (\ref{uprod}) follows from (\ref{vprod})
and so (\ref{uvolprod}) follows from (\ref{volprod}).
\end{proof}

\begin{prp}
\label{for}
If $F$ is a forest then $\Vol(\mathrm{Cut}(F))=1$.
\end{prp}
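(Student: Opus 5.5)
The plan is to reduce to a single edge via Proposition \ref{union}. The base case is the graph $K_2$ with one edge $(i,j)$. Its cut vectors are $x^S_{ij}\in\{0,1\}$, and both values occur (take $S=\emptyset$ and $S=\{i\}$ when $i<n$). So $\mathrm{Cut}(K_2)$ is the segment $[0,1]\subset\R^1$, which has volume $1$. An isolated vertex contributes no coordinates; we can ignore it, or treat it as the trivial factor $\R^0$ of volume $1$.

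Next, induct on the number of edges of $F$. If $F$ has no edges, there is nothing to prove. Otherwise pick a leaf edge $e=(u,v)$, so that $v$ has degree $1$ in $F$; such an edge exists in any forest with at least one edge. Write $F=F'\cup e$, where $F'$ is $F$ with the edge $e$ (and the vertex $v$, if $v$ is now isolated) removed. The graphs $F'$ and $e$ share at most the single vertex $u$, and they are disjoint when $u$ is also isolated in $F$. Proposition \ref{union} therefore applies and gives
\[
\Vol(\mathrm{Cut}(F)) = \Vol(\mathrm{Cut}(F'))\cdot \Vol(\mathrm{Cut}(e)).
\]
The second factor is $1$ by the base case, and the first is $1$ by induction, since $F'$ is a forest with fewer edges.

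Equivalently, $\mathrm{Cut}(F)$ is the full unit cube $[0,1]^{|E|}$. For a forest, every $0/1$ edge labelling is realised by a cut: fix the side of one root in each component, then propagate along tree edges, switching side exactly when the edge is labelled $1$. This direct argument is worth giving alongside the induction. It makes clear that the vertex labelling convention $S\subseteq[n-1]$ (vertex $n$ never in $S$) causes no trouble, because complementing $S$ within a component leaves the cut vector unchanged.

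The only delicate step is checking that the hypothesis of Proposition \ref{union} holds when peeling off an edge: $F'$ and $e$ must meet in at most one vertex. This holds because $v$ is a leaf, so no vertex of $F'$ other than $u$ lies on $e$.
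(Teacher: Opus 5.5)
Your argument is correct and is essentially the paper's: start from $\mathrm{Cut}(K_2)=[0,1]$ and apply Proposition~\ref{union} each time an edge is attached at a single vertex. The paper grows each tree from $K_2$ and then multiplies over components, while you peel leaf edges from the whole forest, which is the same induction; your extra observation that $\mathrm{Cut}(F)$ is the whole cube $[0,1]^{|E|}$, because every $0/1$ edge labelling of a forest is realised by a cut, is a good direct check that avoids Proposition~\ref{union} altogether.
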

\begin{proof}
$\mathrm{Cut}(K_2)$ is the line segment with endpoints 0 and 1
and therefore has volume (length) 1.
Any tree $T$ can be constructed from $K_2$ by successively
adjoining a new edge at an existing vertex.
It follows from Proposition \ref{union} that
$\Vol(\mathrm{Cut}(T))=1$. Since $F$ is a collection of trees, it again follows from Proposition \ref{union} that
$\Vol(\mathrm{Cut}(F)) = 1$.
\end{proof}

Next, we compute the volume of the cycle $C_n$ on $n$ vertices.
\begin{prp}
\label{cycle}
\begin{equation*}
\Vol(\mathrm{Cut}(C_n))= 1 - \frac{2^{n-1}}{n!}.
\end{equation*}
\end{prp}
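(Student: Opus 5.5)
By Theorem~\ref{seymour}, since $C_n$ has no $K_5$ minor, $\mathrm{Cut}(C_n)=\mathrm{Met}(C_n)$. For $n\ge 4$ the cycle is induced, so the polytope is the unit cube $[0,1]^n$ cut by the cycle inequalities
\[
\sum_{e\in F}x_e-\sum_{e\in C_n\setminus F}x_e\le |F|-1,\qquad F\subseteq E(C_n),\ |F|\ \text{odd}.
\]
For $n=3$ these are exactly the triangle inequalities. The plan is to compute the volume as $1$ minus the total volume of the regions of the cube removed by these inequalities.

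First substitute $z_e=1-x_e$ for $e\in F$ and $z_e=x_e$ for $e\notin F$. This is a measure-preserving symmetry of the cube. The inequality for $F$ becomes $\sum_{e\in F}(1-z_e)-\sum_{e\notin F}z_e\le |F|-1$, that is,
\[
\sum_{e} z_e\ge 1 .
\]
So the region cut off by the inequality for $F$ is the image of the corner simplex $\{z\ge 0,\ \sum_e z_e<1\}$. This simplex lies inside the cube and has volume $1/n!$. There are $2^{n-1}$ odd subsets $F$, so if the removed regions are pairwise disjoint, the volume is $1-2^{n-1}/n!$.

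The main step is therefore disjointness of the removed simplices. In $x$-coordinates, the simplex for $F$ is the set of points with $\sum_e |x_e-\chi^F_e|<1$, where $\chi^F$ is the incidence vector of $F$. This is the open $\ell_1$-ball of radius $1$ around the cube vertex $\chi^F$, intersected with the cube. Distinct odd sets $F\neq F'$ satisfy $|F\triangle F'|\ge 2$, since both sizes are odd and so the symmetric difference has even, nonzero size. Hence $\|\chi^F-\chi^{F'}\|_1\ge 2$, and by the triangle inequality the two open unit $\ell_1$-balls are disjoint. This gives the disjointness, so the volume is $1-2^{n-1}/n!$.

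As a consistency check, $n=3$ gives $1-4/6=1/3$, matching Table~\ref{tab:volumes}. One small point to verify is that each removed set really is the full simplex and is not truncated by the cube. This holds because the simplex $\{z\ge0,\sum z_e\le 1\}$ is contained in $[0,1]^n$. The boundaries have measure zero, so strict versus non-strict inequalities do not affect the volumes.
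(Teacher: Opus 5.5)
Your proof is correct and follows the paper's argument: $\mathrm{Cut}(C_n)$ is the unit cube minus $2^{n-1}$ corner simplices of volume $1/n!$, one cut off by each odd-$F$ cycle inequality. Your reflection $z_e=1-x_e$ and your $\ell_1$-ball/parity argument (distinct odd sets satisfy $|F\triangle F'|\ge 2$) supply the disjointness justification that the paper only asserts.
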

\begin{proof}
By Theorem \ref{seymour} there are $2^{n-1}$ non-box cycle
inequalities for $\mathrm{Met}(C_n)$ given by:
\begin{equation*}
\sum_{e \in F} x_e -\sum_{e\in C \setminus F} x_e \leq |F|-1,~~~|F|~odd, F \subseteq C.
\end{equation*}
Together with the box inequalities, these describe $\mathrm{Cut}(C_n)$.
$\mathrm{Cut}(C_n) \subset [0,1]^n$ and has $2^{n-1}$ vertices.
Each cycle inequality cuts off the cube vertex
$x_e=1, e \in F$, $x_e=0,e \in C \setminus F$.
The $2^{n-1}$ simplices have disjoint interiors,
each with volume $1/n!$ and so the volume of the cube is reduced from
1 to $ 1 - \frac{2^{n-1}}{n!}$. This remaining set is $\mathrm{Cut}(C_n)$
since we have intersected the cube
with all of the cycle inequalities. 
\end{proof}

We now combine these previous results to compute the volumes
of additional classes of graphs. 
A {\em cactus} is a connected graph where each edge is in at most one cycle.
An example is shown in Figure \ref{cactus}.

\begin{figure} [H]
\centering
\includegraphics[scale=0.3,angle=90]{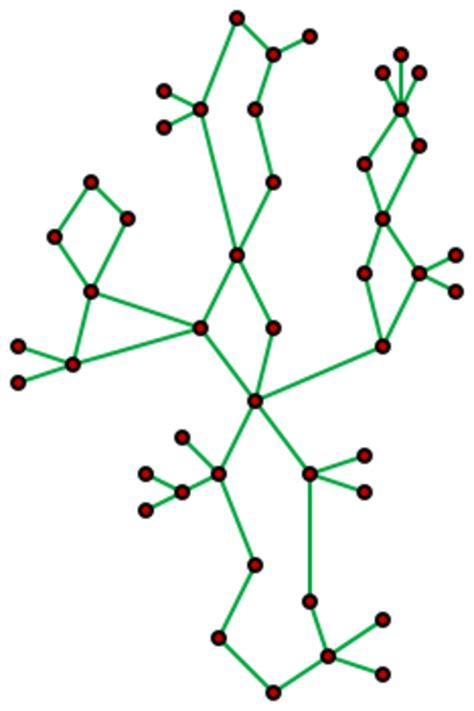}
\caption{Cactus (Eppstein \cite{cactus}).}
\label{cactus}
\end{figure}
 
\begin{prp}
\label{prop:cactus}
Let $G$ be a cactus containing cycles $C^i_{n_i}, i=1, \ldots , m$
for $m \ge 1$. Then
\[
\Vol(\mathrm{Cut}(G))
=
\prod_{i=1}^m \Vol(\mathrm{Cut}(C^i_{n_i})).
\]
\end{prp}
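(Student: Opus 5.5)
The plan is to decompose the cactus into its cycles and the bridges lying on no cycle. Then I apply Proposition \ref{union} repeatedly, and finish with Proposition \ref{for} for the tree-like remainder.

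\emph{Block decomposition.} A cactus $G$ is connected and each edge lies in at most one cycle. Hence every block (maximal 2-connected subgraph, or bridge) of $G$ is either a single edge $K_2$ or one of the cycles $C^i_{n_i}$. Indeed, a 2-connected block that is not a cycle contains an edge lying on two distinct cycles, which is impossible. The block--cutvertex tree of a connected graph is a tree, so the blocks can be ordered $B_1,\dots,B_r$ such that each $B_k$ ($k\ge 2$) shares exactly one vertex with $H_{k-1}=B_1\cup\dots\cup B_{k-1}$. To get this ordering, I take a leaf-removal order of the block--cutvertex tree in reverse, that is, a breadth-first order from a root block.

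\emph{Induction.} With $H_k = H_{k-1}\cup B_k$ and $H_{k-1}$, $B_k$ sharing one vertex, Proposition \ref{union} gives
\[
\Vol(\mathrm{Cut}(H_k))=\Vol(\mathrm{Cut}(H_{k-1}))\,\Vol(\mathrm{Cut}(B_k)).
\]
By induction on $k$, $\Vol(\mathrm{Cut}(G))=\prod_{k=1}^r \Vol(\mathrm{Cut}(B_k))$. Each bridge block is $K_2$, whose cut polytope is a unit segment of volume $1$ by Proposition \ref{for}. Each cycle block $C^i_{n_i}$ contributes $\Vol(\mathrm{Cut}(C^i_{n_i}))$. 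This yields the claimed product, and via Proposition \ref{cycle} the explicit value $\prod_i\bigl(1-2^{n_i-1}/n_i!\bigr)$.

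\emph{Main obstacle.} The combinatorial content is routine. The one point needing care is that Proposition \ref{union} is stated with the cut-vector parametrisation $S\subseteq[n-1]$, where the shared vertex is labelled $n$. At step $k$ the union $H_{k-1}\cup B_k$ must be relabelled so that the shared vertex becomes the last one. Relabelling vertices only permutes the coordinates (edges), and the cut polytope does not depend on which vertex is excluded from $S$, since $x^S=x^{V\setminus S}$. So volumes are unaffected. I would state this invariance explicitly before running the induction.
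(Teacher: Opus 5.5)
Your proof is correct and follows essentially the same route as the paper, which peels off pendant edges and cycles down to a single cycle and rebuilds with Proposition~\ref{union}; that peeling is a leaf-removal order of the block--cutvertex tree, which you make explicit. Your extra checks are sound small rigor points that the paper leaves implicit: that every block is a $K_2$ or one of the cycles, and that relabelling the shared vertex is harmless because $x^S=x^{V\setminus S}$.
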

\begin{proof}
A cactus can be decomposed by repeatedly deleting
a pendant vertex or cycle until a single cycle $C$ remains. Reversing
this process, we can reconstruct the cactus, updating the volume
at each step. The initial volume is $\Vol(\mathrm{Cut}(C))$.
Repeatedly using Proposition \ref{union}, appending
an edge does not change the volume and appending a cycle $C'$ results in
the volume being multiplied by $\Vol(\mathrm{Cut}(C'))$. The proposition follows.
\end{proof}

Combining with Proposition \ref{cycle}, we can readily compute
the volume of $\mathrm{Cut}(G)$ for cacti.
To illustrate, the volume of the cactus given in
Figure \ref{cactus} is

\begin{equation*}
\Vol(\mathrm{Cut}(\text{cactus}))
=
\Vol(\mathrm{Cut}(C_8))~
\Vol(\mathrm{Cut}(C_6))~
\Vol(\mathrm{Cut}(C_4))^4~
\Vol(\mathrm{Cut}(C_3))
=\frac{314}{315}\cdot\frac{43}{45}\cdot
\left(\frac{2}{3}\right)^4\cdot\frac{1}{3}.
\end{equation*}
 
Next, we consider necklaces which are formed by taking a cycle
$C_n$ and appending a cycle  $C^i_{n_i}, i=1, \ldots , n$
at each of its vertices. An example is given in Figure \ref{neck}.
\begin{figure} [H]
\centering
\includegraphics[width=5cm]{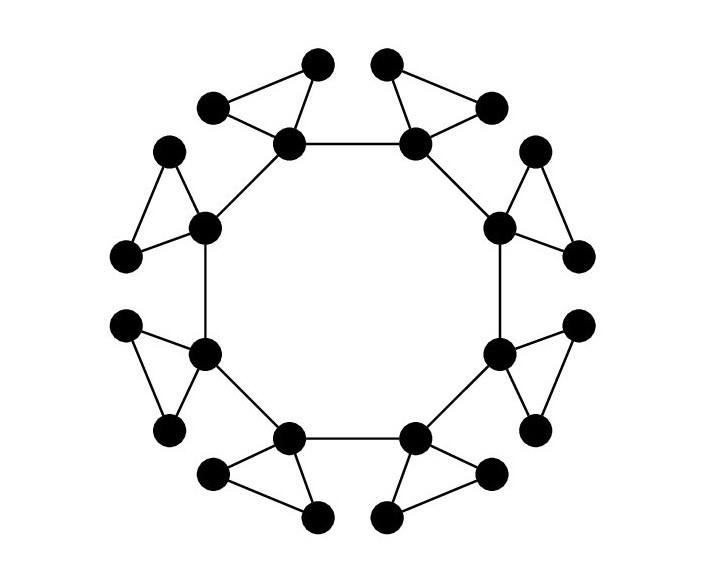}
\caption{8-necklace: $\text{neck}_8$ (Lee and Skipper \cite{LS20}).}
\label{neck}
\end{figure}

\begin{prp}
\label{prop:necklace}
Let $G$ be a necklace based on $C_n$.
\[
\Vol(\mathrm{Cut}(G))
=
\Vol(\mathrm{Cut}(C_n))
\prod_{i=1}^n \Vol(\mathrm{Cut}(C^i)).
\]
\end{prp}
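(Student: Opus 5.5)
The necklace is built by gluing cycles one at a time, each along a single cut vertex, so the plan is to apply Proposition \ref{union} repeatedly, in the same way as in the proof of Proposition \ref{prop:cactus}. In fact a necklace is itself a cactus. Each edge of the base cycle $C_n$ lies only in $C_n$, and each edge of an appended cycle $C^i$ lies only in $C^i$, so every edge is in at most one cycle. Proposition \ref{prop:cactus} then gives the formula immediately, since the cycles of $G$ are exactly $C_n, C^1,\ldots,C^n$.

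To keep the argument self-contained, I would also give the direct induction. Write $G_0=C_n$, and let $G_k$ be $G_{k-1}\cup C^k$, where $C^k$ is attached at vertex $k$ of $C_n$. The key point is that $C^k$ meets $G_{k-1}$ in exactly one vertex, namely vertex $k$. This holds because the appended cycles are attached at distinct vertices of $C_n$ and are otherwise vertex-disjoint from $C_n$ and from each other.

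With that established, Proposition \ref{union} applied to $G_{k-1}$ and $C^k$ gives
\[
\Vol(\mathrm{Cut}(G_k))=\Vol(\mathrm{Cut}(G_{k-1}))\,\Vol(\mathrm{Cut}(C^k)).
\]
Iterating from $k=1$ to $k=n$ and using $G_n=G$ yields the stated product.

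The only point needing care is this single-shared-vertex condition: the definition of a necklace must be read so that the appended cycles share no vertices other than their attachment points. Once that is granted, the argument is routine. The explicit value then follows by substituting Proposition \ref{cycle} for each factor.
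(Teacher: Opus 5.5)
Your proposal is correct and matches the paper's proof. The paper likewise deletes the appended cycles $C^1,\ldots,C^n$ to reach $C_n$, then rebuilds $G$ by repeated application of Proposition~\ref{union}, exactly as in your induction $G_k=G_{k-1}\cup C^k$. Your observation that a necklace is itself a cactus is also valid and makes the result a direct special case of Proposition~\ref{prop:cactus}; the paper only alludes to this by calling the proof ``similar''.
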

\begin{proof}
The proof is similar to Proposition \ref{prop:cactus}.
$G$ can be decomposed by deleting the cycles $C^i, i=1,...,n$
resulting in $C_n$. Reversing the process and updating the
volume, the proposition follows.
\end{proof}

To illustrate, the volume of the necklace given in
Figure \ref{neck} is
\[
\Vol(\mathrm{Cut}(\text{neck}_8))
=
\Vol(\mathrm{Cut}(C_8))~
\Vol(\mathrm{Cut}(C_3))^8=\frac{314}{315}\cdot\frac{1}{3^8}.
\]

Lee and Skipper \cite{LS20} calculated the volumes of the correlation polytope of
the star, path and cycle.
They index stars and paths by their number of edges. Accordingly, their
$n$-edge star and path are denoted here by $S_{n+1}$ and $P_{n+1}$,
respectively. For cycles no shift is needed.
The covariance map gives the volume of the cut polytope of the
corresponding suspensions.
Their results, along with those earlier in this section,
are given in Table \ref{LS}, where $F_n$ denotes a forest on $n$ nodes that has at least one edge.

\begin{table} [H]
\caption{Volumes of sparse cut polytopes.}
\label{LS}
  \begin{center}
\renewcommand{\arraystretch}{1.2}
    \begin{tabular}{|c|c|c|}
\hline
$G$ & $\Vol(\CG)$ & $\Vol(\mathrm{Cut}(\nabla G))$    \\
  & (this paper) & Lee and Skipper \cite{LS20} \\
\hline
$S_{n+1}$ & 1 & $ \frac{2^n(n!)^2}{(2n+1)!}$ \normalsize  \\
\hline
$P_{n+1}$ & 1 & $\frac{A_{2n+1}}{(2n+1)!}$ \\
\hline
$F_n$ & 1 &  open  \\
\hline
$C_n$ &  $1 -\frac{2^{n-1}}{n!}$ &  $\frac{nA_{2n-1}-2^{2n-2}}{(2n)!}$ \\
\hline
cactus &  $\prod_{i=1}^m \Vol(\mathrm{Cut}(C^i_{n_i}))$ \normalsize & open\\
\hline
necklace  &  $ \Vol(\mathrm{Cut}(C_n))
\prod_{i=1}^n \Vol(\mathrm{Cut}(C^i))$ \normalsize & open \\
\hline
    \end{tabular}
  \end{center}
\begin{center}
$A_n = $ number of alternating permutations of $[n]$,\\ also known as 
Andr\'e or Euler numbers \cite{andre1881memoire, oeisA000111}.
\end{center}
\end{table}

We now consider the volume of the elliptope
for the graphs described in this section.

\begin{prp}
\label{ellunion}
If graphs $G_1$ and $G_2$ are disjoint or share one vertex, then
\[
\mathcal E(G_1\cup G_2)
=
\mathcal E(G_1)\times\mathcal E(G_2)
\]
and
\[
\mathcal I(G_1\cup G_2)
=
\mathcal I(G_1)\times\mathcal I(G_2).
\]
\end{prp}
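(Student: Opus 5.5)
The plan is to prove the statement for $\mathcal E$ first; the statement for $\mathcal I$ then follows at once, because the affine map (\ref{E2I}) acts coordinatewise on edges. The map sending $\mathcal E(G)$ to $\mathcal I(G)$ is therefore the product of the corresponding maps for $G_1$ and $G_2$, and a coordinatewise map sends a product set to the product of the images.

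For $\mathcal E$ I would prove two inclusions, after identifying $\R^{|E|}=\R^{|E_1|}\times\R^{|E_2|}$. The edge sets are disjoint: two graphs sharing at most one vertex cannot share an edge.

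\emph{Easy inclusion.} Take $y\in\mathcal E(G_1\cup G_2)$, witnessed by a PSD matrix $Y$ with unit diagonal. The principal submatrix of $Y$ on $V(G_k)$ is PSD with unit diagonal, and its entries on $E_k$ are exactly $y^{E_k}$. Hence $y^{E_k}\in\mathcal E(G_k)$ for $k=1,2$.

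\emph{Main step, the reverse inclusion.} Given $y^{E_1}\in\mathcal E(G_1)$ and $y^{E_2}\in\mathcal E(G_2)$, I need a single PSD completion on $V(G_1)\cup V(G_2)$. I would use Gram representations. Write the witness matrices as $Y_1=(\langle u_i,u_j\rangle)$ with unit vectors $u_i$, $i\in V(G_1)$, and $Y_2=(\langle w_i,w_j\rangle)$ with unit vectors $w_i$, $i\in V(G_2)$.

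\begin{itemize}
\item If $G_1$ and $G_2$ are disjoint, embed the $u_i$ and the $w_i$ into orthogonal subspaces of a common space $\R^{d_1+d_2}$. The resulting Gram matrix is PSD with unit diagonal and agrees with $y$ on $E_1\cup E_2$. Equivalently, take the block-diagonal matrix $Y_1\oplus Y_2$.
\item If they share the vertex $v$, the vectors $u_v$ and $w_v$ are both unit vectors. Apply an orthogonal transformation $Q$ to the $w$-system so that $Qw_v=u_v$, working in a space large enough to hold both systems. Orthogonal maps preserve inner products, so $\{Qw_i\}$ still has Gram matrix $Y_2$. Now define $z_i=u_i$ for $i\in V(G_1)$ and $z_i=Qw_i$ for $i\in V(G_2)$. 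This is consistent at $v$, so the Gram matrix of $\{z_i\}$ is PSD with unit diagonal and restricts to $y^{E_1}$ on $E_1$ and $y^{E_2}$ on $E_2$.

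Because no edge joins $V(G_1)\setminus\{v\}$ to $V(G_2)\setminus\{v\}$, the cross inner products are unconstrained, and the completion lies in $\mathcal E(G_1\cup G_2)$.
\end{itemize}

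The only step needing care is this gluing at the shared vertex, and the rotation argument handles it. An alternative would be to cite the fact that a chordal pattern (here a clique-sum over a one-vertex separator) has PSD completions, but the explicit vector construction is self-contained.
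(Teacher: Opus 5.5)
Your proof is correct and takes essentially the same approach as the paper: glue Gram representations so the shared vertex gets a common unit vector, obtain the other inclusion by restriction to principal submatrices, and pass to $\mathcal I$ through the coordinatewise map (\ref{E2I}). The only difference is cosmetic: the paper also places the components orthogonal to the shared vector in mutually orthogonal subspaces, whereas your rotation $Q$ with $Qw_v=u_v$ already suffices because the cross inner products are unconstrained.
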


\begin{proof}
The result is immediate for disjoint graphs by taking orthogonal
Gram representations. Suppose $G_1$ and $G_2$ share a vertex $v$.
Choose Gram representations in which the vector corresponding to $v$
is the same unit vector $u$. The components orthogonal to $u$ in the
two representations may be placed in mutually orthogonal subspaces.
This preserves all inner products corresponding to edges of
$G_1$ and $G_2$. The reverse inclusion follows by restriction.
The result for $\mathcal I$ follows from (\ref{E2I}).
\end{proof}

For graphs with no $K_4$-minor (series-parallel graphs),
we have the following result of
Laurent \cite{DL97} (Theorem 31.3.7).
\begin{thm}
\label{Laurent}
The following are equivalent for a graph $G$:
\begin{eqnarray*}
&&\EG = \{ x=\cos( \pi~a )~| a~\in~\CG \}. \\
&&\EG = \{ x=\cos( \pi~a )~| a~\in~\MG \}. \\
&&G \text{ has no } K_4 \text{ minor}.
\end{eqnarray*}
\end{thm}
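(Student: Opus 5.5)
The plan is to establish the three implications (iii)$\Rightarrow$(i), (i)$\Leftrightarrow$(ii), and (ii)$\Rightarrow$(iii). Throughout, $x=\cos(\pi a)$ is read coordinatewise and the elliptope is taken in the $y$-coordinates of $\EG$. I work with the angular map $\Phi_G$ from (\ref{angularmap}), composed with the affine change (\ref{E2I}), so that a point $y\in\EG$ corresponds to the angles $a_{ij}=\frac1\pi\arccos y_{ij}\in[0,1]$. The statements then say that the set of angle vectors of $\EG$ equals $\CG$, respectively equals $\MG$.

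Two inclusions hold for every graph. First, $\cos(\pi\,\CG)\subseteq\EG$. Given $a=\sum_S\lambda_S x^S$, I would take a random unit vector (Goemans--Williamson style) to realise $a_{ij}$ as an angle divided by $\pi$. The cleaner route is to use Gram vectors in the plane or on the sphere directly: for cut vectors the vectors are $\pm u$. Convexity in angle space is not automatic, so this is where the real content lies.

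Second, angles of Gram vectors satisfy the spherical triangle inequalities $\theta_{ij}\le\theta_{ik}+\theta_{jk}$ and $\theta_{ij}+\theta_{ik}+\theta_{jk}\le2\pi$. Summing these along chords of a cycle gives the cycle inequalities for the angles. Hence the angle vector lies in $\mathrm{Met}(K_n)$, and its projection lies in $\MG$. This yields $\EG\subseteq\cos(\pi\,\MG)$ for every $G$.

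For (iii)$\Rightarrow$(i),(ii), assume $G$ has no $K_4$ minor. Then $G$ has no $K_5$ minor, so Theorem \ref{seymour} gives $\CG=\MG$, and (i) and (ii) coincide. It therefore suffices to show $\cos(\pi\,\MG)\subseteq\EG$. Series-parallel graphs are built from edges by series and parallel compositions, and clique sums along an edge or a vertex. Proposition \ref{ellunion} handles 1-sums. The main step is a 2-sum lemma, which I would prove by analogy with Proposition \ref{ellunion}. Suppose Gram realisations of $G_1$ and $G_2$ agree on the common edge $uv$, which is arranged by a rotation since only the angle $\theta_{uv}$ matters. Then the components orthogonal to $\mathrm{span}(u,v)$ can be placed in orthogonal subspaces, which glues the two realisations. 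Since $G$ is a subgraph of a chordal series-parallel graph (a 2-tree), I reduce to triangles. Here I use that a point of $\mathrm{Met}(K_3)$ in angle form satisfies the spherical triangle inequalities, so three unit vectors with those angles exist. The triangle case is exactly the classical fact $\mathcal{E}_3=\cos(\pi\,\mathrm{Met}_3)$. The obstacle is that a metric point on $G$ must first be extended to a metric point on the 2-tree completion. For this I would use the fact that the metric polytope projection of a chordal extension is $\MG$, which follows from the cycle description in Theorem \ref{seymour}: the chord value is chosen between the bounds imposed by the two sides, and Fourier--Motzkin elimination shows those bounds are compatible.

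For (ii)$\Rightarrow$(iii), I argue by contrapositive and minor-monotonicity. If $G$ has a $K_4$ minor, then contracting and deleting preserves the failure: deletion is projection, and contraction corresponds to setting $a_e=0$, i.e.\ identifying vectors. So it suffices to exhibit, for $K_4$ itself, a point of $\mathrm{Met}_4$ whose cosine matrix is not PSD. The point $a_{ij}=2/3$ for all pairs works. It is metric, since $3\cdot\frac23=2$, but its cosine matrix is $I-\frac12(J-I)=\frac32I-\frac12J$. This matrix has eigenvalue $\frac32-2<0$ on the all-ones vector. For the equivalence with (i) in the $K_4$ case, the same point lies in $\mathrm{Cut}_4=\mathrm{Met}_4$, so (i) fails too. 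I expect the hardest step to be the rigorous minor-monotonicity under contraction, together with the chordal extension of metrics.
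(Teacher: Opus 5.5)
The paper gives no proof of this statement; it quotes it from Deza--Laurent \cite{DL97} (Theorem 31.3.7). Your architecture is a sound, standard route. Sufficiency comes from placing $G$ inside a 2-tree, gluing triangle realisations along shared edges, and using $\mathcal E_3=\cos(\pi\,\mathrm{Met}_3)$. Necessity comes from the point $a=\tfrac23\mathbf 1$ on $K_4$, together with closure of (i) and (ii) under deletion (projection) and contraction (the face $a_e=0$, where the two Gram vectors coincide).

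\textbf{A false claim.} The first of your ``two inclusions that hold for every graph'', $\cos(\pi\,\CG)\subseteq\EG$, is false, and your own example refutes it. Indeed $\tfrac23\mathbf 1=\tfrac13\bigl(x^{\{1,2\}}+x^{\{1,3\}}+x^{\{2,3\}}\bigr)\in\mathrm{Cut}_4$, yet $\tfrac32 I-\tfrac12 J$ is not positive semidefinite. The Goemans--Williamson argument you allude to proves the reverse inclusion, $\EG\subseteq\cos(\pi\,\CG)$, because $\frac1\pi\arccos(v_i^Tv_j)$ is the probability that a random hyperplane separates $v_i$ and $v_j$ (Theorem \ref{DLangles}, Proposition \ref{angularinclusion}). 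That inclusion holds for every $G$ and is stronger than your spherical-triangle inclusion $\EG\subseteq\cos(\pi\,\MG)$.

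\textbf{Repair and simplifications.} No later step uses the false claim, so your proof survives once you delete it. Using the correct inclusion also streamlines the argument. From $\EG\subseteq\cos(\pi\,\CG)\subseteq\cos(\pi\,\MG)$, (ii) implies (i) for every graph. The cycle (iii)$\Rightarrow$(ii)$\Rightarrow$(i)$\Rightarrow$(iii) then closes using the explicit cut decomposition above and minor-closure of (i), and Theorem \ref{seymour} is not needed at all.

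The ``chordal extension of metrics'' you flag as an obstacle is vacuous here, because $\MG$ is defined as the projection of $\Mn$. Each $a\in\MG$ is therefore the restriction of some $\tilde a\in\Mn$. The restriction of $\tilde a$ to each triangle of a 2-tree $T\supseteq G$ lies in $\mathrm{Met}_3$. Glue to obtain $\cos(\pi\tilde a)|_{E(T)}\in\mathcal E(T)$, then project to $E$; no Fourier--Motzkin argument is required.

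\textbf{Still to justify.} The one ingredient you should prove rather than call classical is the converse half of the triangle case, $\cos(\pi\,\mathrm{Met}_3)\subseteq\mathcal E_3$. One way: with $\theta_{ij}=\pi a_{ij}$ and $\sigma=\frac12\sum\theta_{ij}$, the $3\times3$ determinant equals $4\sin\sigma\prod_{i<j}\sin(\sigma-\theta_{ij})$. Every factor is nonnegative when $a\in\mathrm{Met}_3$. The $1\times1$ and $2\times2$ principal minors are nonnegative automatically.
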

Equivalently, $\Phi_G$ is a bijection from $\IG$ onto $\CG$
if and only if $G$ has no $K_4$ minor.

Applying this to $G=K_2$ we see that $\mathcal{E}(K_2)$ is the
line segment $-1 \le x \le 1$ and hence $\mathcal{I}(K_2)=\mathrm{Cut}(K_2)$,
the line segment $0 \le x \le 1$.
Therefore $\Vol(\mathcal{I}(K_2))=1$. Using Proposition
\ref{ellunion} and following the argument of Proposition \ref{for}
we conclude that for any forest $F$, $\Vol(\mathcal{I}(F))=1$.

We now give an exact formula for the volume of the elliptope of every cycle.

\begin{thm}
\label{cycleelliptope}
\[
\Vol(\mathcal E(C_n))
=
2^{n-1}\sum_{m=1}^{n-2}a_m,
\]
or equivalently,
\[
\Vol(\mathcal I(C_n))
=
\frac12\sum_{m=1}^{n-2}a_m,
\]
where
\[
a_1=\frac{\pi^2}{8},
\qquad
a_2=\frac{\pi^2}{16},
\]
and, for $m\ge3$, 
\[
a_m
=
\frac{2m-1}{2m+2}a_{m-1}
-
\frac{\pi^2}{4m(m+1)}a_{m-2}.
\]
For example,
\[
\Vol(\mathcal E(C_3))=\frac{\pi^2}{2},\qquad
\Vol(\mathcal E(C_4))=\frac{3\pi^2}{2},\qquad
\Vol(\mathcal E(C_5))
=\frac{\pi^2(87-\pi^2)}{24}.
\]
\end{thm}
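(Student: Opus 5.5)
By Theorem \ref{Laurent}, the cycle $C_n$ has no $K_4$ minor, so the angular map $\Phi_{C_n}$ is a bijection from $\mathcal I(C_n)$ onto $\mathrm{Cut}(C_n)$. Equivalently, $\mathcal E(C_n)=\{\cos(\pi a): a\in \mathrm{Cut}(C_n)\}$. I will compute the volume by the change of variables $y_e=\cos(\pi a_e)$, whose Jacobian is $\prod_e \pi\sin(\pi a_e)$. This gives
\[
\Vol(\mathcal E(C_n))=\int_{\mathrm{Cut}(C_n)}\prod_{e\in C_n}\pi\sin(\pi a_e)\,da .
\]
The proof of Proposition \ref{cycle} shows that $\mathrm{Cut}(C_n)$ is the unit cube with $2^{n-1}$ corner simplices removed, and these simplices have disjoint interiors. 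Over the whole cube the integrand integrates to $2^n$, since $\int_0^1\pi\sin(\pi t)\,dt=2$ for each coordinate.

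For the corner at the vertex indexed by an odd set $F$, I substitute $b_e=1-a_e$ for $e\in F$ and $b_e=a_e$ otherwise. This substitution leaves $\sin(\pi\cdot)$ invariant, and the removed region becomes the standard simplex $\Delta_n=\{b\ge 0,\ \sum b_e<1\}$. Hence
\[
\Vol(\mathcal E(C_n))=2^n-2^{n-1}c_n,\qquad c_n:=\pi^n\int_{\Delta_n}\prod_{i=1}^n\sin(\pi b_i)\,db .
\]
As a check, $n=3$ should return $\pi^2/2$, which pins down $c_3=2-\pi^2/4$.

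Next I evaluate $c_n$. Let $g_n(t)$ be the $n$-fold convolution of $\pi\sin(\pi t)$ on $[0,\infty)$; since every coordinate of $\Delta_n$ is at most $1$, $c_n=\int_0^1 g_n(t)\,dt$. The Laplace transform of $g_n$ is $\pi^n/(s^2+\pi^2)^n$, so $g_n$ is an explicit combination of $t^k\sin\pi t$ and $t^k\cos\pi t$. It also satisfies the ODE relation $(D^2+\pi^2)g_n=\pi^2 g_{n-1}$ with zero initial data of high order. Integrating this relation over $[0,1]$, and integrating by parts to handle the boundary terms at $t=1$, should yield a linear recurrence for $c_n$ together with auxiliary boundary quantities such as $g_n(1)$ and $g_n'(1)$. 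Eliminating those auxiliaries should leave a three-term recurrence.

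The target identity is $2-c_n=\sum_{m=1}^{n-2}a_m$, i.e. $a_m=c_{m+1}-c_{m+2}$ with $c_2=2$; the case $n=2$ is consistent because $\mathcal I(K_2)$ is the segment of volume $1$. I will verify the initial values $a_1=c_2-c_3=\pi^2/4$ and $a_2=\pi^2/16$ directly from the explicit $g_3$ and $g_4$. Then I will show that the differences $c_{m+1}-c_{m+2}$ satisfy
\[
a_m=\frac{2m-1}{2m+2}a_{m-1}-\frac{\pi^2}{4m(m+1)}a_{m-2}.
\]
The factor $\tfrac12$ in the $\mathcal I$ version follows from (\ref{VE2I}) with $|E|=n$.

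The main obstacle is this last step: converting the convolution/ODE structure into exactly the stated recurrence with its rational coefficients in $m$. My first attempt will be to write $g_n(t)$ explicitly through Bessel-type polynomials, using $t^{n-1/2}J_{n-1/2}(\pi t)$. The standard Bessel recurrence then supplies a three-term relation with coefficients linear in $m$, and I expect the stated recurrence to follow after taking differences. If the index bookkeeping does not match, I will instead fit the recurrence against the explicit values of $c_3,\dots,c_6$ and prove it by induction via integration by parts.
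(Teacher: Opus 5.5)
Your reduction is sound and matches the paper's. Your $c_n$ is the paper's $J_n(\pi)$, $\Vol(\mathcal E(C_n))=2^{n-1}(2-c_n)$, and $a_m=c_{m+1}-c_{m+2}$ is the right target. The gap is that the recurrence, which is the real content of the theorem, is never derived. You only say it ``should'' follow, and fitting it to $c_3,\dots,c_6$ proves nothing for general $m$.

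Integrating $(D^2+\pi^2)g_n=\pi^2 g_{n-1}$ over $[0,1]$, using $g_n(0)=g_n'(0)=0$ for $n\ge2$, does give the useful identity $a_m=g_{m+2}'(1)/\pi^2$. But further integrations or evaluations of this constant-coefficient relation only trade one boundary unknown for another ($g_n(1)$, $g_n''(1)$, moments). The elimination you hope for does not close without a relation that involves $m$ explicitly. The missing ingredient is the first-order identity
\[
g_{m+1}'(t)=\frac{\pi^2t}{2m}\,g_m(t),\qquad m\ge1.
\]
It follows from $s(s^2+\pi^2)^{-m-1}=-\frac{1}{2m}\frac{d}{ds}(s^2+\pi^2)^{-m}$ together with $-\frac{d}{ds}\mathcal L h=\mathcal L(th)$. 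In your Bessel picture, where $g_n(t)\propto t^{n-1/2}J_{n-1/2}(\pi t)$, it is the identity $\frac{d}{dx}[x^\nu J_\nu(x)]=x^\nu J_{\nu-1}(x)$.

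With $b_m:=g_m'(1)$, this identity gives $g_m(1)=\frac{2m}{\pi^2}b_{m+1}$. Differentiating it gives $g_{m+1}''(1)=\frac{\pi^2}{2m}\bigl(g_m(1)+b_m\bigr)$. Substituting both into the ODE at $t=1$, namely $g_{m+1}''(1)+\pi^2g_{m+1}(1)=\pi^2g_m(1)$, yields
\[
b_{m+2}=\frac{2m-1}{2m+2}b_{m+1}-\frac{\pi^2}{4m(m+1)}b_m,\qquad m\ge1.
\]
This is the stated recurrence for $a_m=b_{m+2}/\pi^2$, and the initial data $b_1=-\pi^2$, $b_2=0$ give $a_1$ and $a_2$. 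This is exactly the paper's argument, rescaled: the paper uses $\sin t$ on $[0,\pi]$ where you use $\pi\sin\pi t$ on $[0,1]$.

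Your Bessel route would also succeed. It gives $a_m=\frac{\pi^{m+2}}{(m+1)!\,2^{m+3/2}}J_{m+1/2}(\pi)$, and $J_{\nu-1}+J_{\nu+1}=\frac{2\nu}{x}J_\nu$ at $x=\pi$, $\nu=m-\frac12$, produces precisely the stated coefficients. As written, though, it is only announced, not carried out.

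Also correct the arithmetic:
\begin{itemize}
\item $\mathcal L g_n=\pi^{2n}/(s^2+\pi^2)^n$, not $\pi^n/(s^2+\pi^2)^n$; your ODE matches the corrected transform.
\item $8-4c_3=\pi^2/2$ gives $c_3=2-\pi^2/8$, hence $a_1=\pi^2/8$ as stated, not $\pi^2/4$.
\item $c_2=2$ should come from $c_1=2$ and $c_1-c_2=g_2'(1)/\pi^2=0$, not from $\mathcal I(K_2)$. With $c_2=2$ the formula at $n=2$ gives volume $0$ (the degenerate digon), not the unit segment.
\end{itemize}
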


\begin{proof}
Since $C_n$ has no $K_4$ minor, Theorem \ref{Laurent} implies that
$\Phi_{C_n}$ is a bijection from $\mathcal I(C_n)$ onto $\mathrm{Cut}(C_n)$.
Identify the edges of $C_n$ with $[n]$. Writing $z=\Phi_{C_n}(x)$, we have
\[
x_i
=
\frac{1-\cos(\pi z_i)}2,
\qquad i=1,\ldots,n.
\]
By Theorem \ref{seymour},
$\mathrm{Cut}(C_n)$ is described by the box inequalities $0\le z_i\le1$ together with,
for every odd set $F\subseteq[n]$, the cycle inequality
\[
\sum_{i\in F}z_i-\sum_{i\notin F}z_i\le |F|-1.
\]
For each odd $F\subseteq[n]$, let $R_F$ denote the coordinate reflection
\[
(R_Fz)_i=
\begin{cases}
1-z_i,&i\in F,\\
z_i,&i\notin F.
\end{cases}
\]
Writing $w=R_Fz$, the closure of the region violating this inequality is mapped onto the standard simplex
\[
\Delta_n=\left\{w\in\mathbb R_{\ge0}^n:\sum_{i=1}^n w_i\le1\right\}.
\]
Thus each odd $F$ cuts off a reflected simplex from the unit cube.
There are $2^{n-1}$ such simplices, and their interiors are disjoint.
Since $\Phi_{C_n}^{-1}$ acts coordinatewise, its Jacobian determinant is
\[
\prod_{i=1}^n \frac{\pi}{2}\sin(\pi z_i)
=
\left(\frac\pi2\right)^n
\prod_{i=1}^n\sin(\pi z_i).
\]
Since the sine factors are invariant under $R_F$, this Jacobian is unchanged.
Applying the change-of-variables formula to $\Phi_{C_n}^{-1}$ gives
\begin{equation}
\label{eq:volumeI}
\Vol(\mathcal I(C_n))
=
\int_{\mathrm{Cut}(C_n)}
\left(\frac\pi2\right)^n
\prod_{i=1}^n\sin(\pi z_i)\,dz.
\end{equation}
The integral over $[0,1]^n$ is
\[
\prod_{i=1}^n
\int_0^1\frac\pi2\sin(\pi z_i)\,dz_i
=1.
\]
For $t\ge0$, define
\begin{equation}
\label{eq:Jn}
J_n(t)
:=
\int_{\substack{u_i\ge0\\u_1+\cdots+u_n\le t}}
\prod_{i=1}^n\sin u_i\,du.
\end{equation}
For $t=\pi$, the scaling $u_i=\pi w_i$ maps $\Delta_n$ to the domain of $J_n(\pi)$. Hence the contribution of any one removed simplex is
\[
2^{-n}J_n(\pi).
\]
Since there are $2^{n-1}$ removed simplices, \eqref{eq:volumeI} becomes
\begin{equation}
\label{eq:volI-J}
\Vol(\mathcal I(C_n))
=
1-\frac12J_n(\pi).
\end{equation}
By \eqref{VE2I},
\begin{equation}
\label{eq:volE-J}
\Vol(\mathcal E(C_n))
=
2^n-2^{n-1}J_n(\pi).
\end{equation}
Define
\[
g_1(t)=\sin t,
\qquad
g_m(t)=\int_0^t g_{m-1}(u)\sin(t-u)\,du,
\qquad m\ge2.
\]
With this notation, \eqref{eq:Jn} can be rewritten as
\[
J_n(t)=\int_0^t g_n(v)\,dv,
\]
and hence
\begin{equation}
\label{eq:Jprime}
J_n'(t)=g_n(t),
\qquad
J_n(0)=0.
\end{equation}
For $m\ge2$, differentiating the defining relation for $g_m$ gives
\[
g_m'(t)=\int_0^t g_{m-1}(u)\cos(t-u)\,du
\]
and
\begin{equation}
\label{eq:id2}
g_m''(t)+g_m(t)=g_{m-1}(t).
\end{equation}
Summing \eqref{eq:id2} over $m=2,\ldots,n$ and using $g_1''=-g_1$ gives
\[
\sum_{m=1}^n g_m''(t)=-g_n(t).
\]
Integrating from $0$ to $t$ and using \eqref{eq:Jprime}, $g_1'(0)=1$, and $g_m'(0)=0$ for $m\ge2$ gives
\begin{equation}
\label{eq:Jsum}
J_n(t)
=
1-\sum_{m=1}^n g_m'(t).
\end{equation}
Since $g_1'(\pi)=-1$ and, from the preceding formula for $g_m'$, $g_2'(\pi)=0$, \eqref{eq:Jsum} gives
\begin{equation}
\label{eq:Jpi}
J_n(\pi)
=
2-\sum_{m=3}^n g_m'(\pi).
\end{equation}
It remains to obtain a recurrence for $g_m'(\pi)$. Since $|\sin t|\le1$,
\[
|g_m(t)|\le\frac{t^{m-1}}{(m-1)!},
\qquad
|J_n(t)|\le\frac{t^n}{n!}.
\]
Hence the Laplace transforms exist for $s>0$. Write
\[
\mathcal L h(s)=\int_0^\infty e^{-st}h(t)\,dt,\qquad s>0.
\]
We use the standard identities
\[
\mathcal L(h')(s)=s\mathcal L h(s)-h(0),
\qquad
\mathcal L(f*g)(s)=\mathcal L f(s)\mathcal L g(s),
\]
where
\[
(f*g)(t)=\int_0^t f(u)g(t-u)\,du.
\]
Since $g_m$ is the $m$-fold convolution of $\sin t$ with itself and
\[
\mathcal L(\sin t)(s)=\frac1{s^2+1},
\]
the convolution theorem gives
\[
\mathcal L g_m(s)=\frac1{(s^2+1)^m}.
\]
Since $g_{m+1}(0)=0$,
\[
\mathcal L(g_{m+1}')(s)=\frac{s}{(s^2+1)^{m+1}}=-\frac1{2m}\frac{d}{ds}(s^2+1)^{-m}.
\]
Using
\[
-\frac{d}{ds}\mathcal L h(s)=\mathcal L\!\left(t\,h(t)\right)(s),
\]
and uniqueness of the Laplace transform, we obtain
\begin{equation}
\label{eq:id1}
g_{m+1}'(t)
=
\frac{t}{2m}g_m(t),
\qquad m\ge1.
\end{equation}
At $t=\pi$, \eqref{eq:id1} gives
\[
g_m(\pi)=\frac{2m}{\pi}g_{m+1}'(\pi),
\qquad
g_{m+1}(\pi)=\frac{2(m+1)}{\pi}g_{m+2}'(\pi).
\]
Differentiating \eqref{eq:id1} and using \eqref{eq:id2} gives
\[
\begin{aligned}
\frac{2(m+1)}{\pi}g_{m+2}'(\pi)
&=g_{m+1}(\pi)\\
&=g_m(\pi)-g_{m+1}''(\pi)\\
&=\frac{2m-1}{\pi}g_{m+1}'(\pi)
-\frac{\pi}{2m}g_m'(\pi).
\end{aligned}
\]
Hence, for $m\ge1$,
\begin{equation}
\label{eq:g-rec}
g_{m+2}'(\pi)
=
\frac{2m-1}{2m+2}g_{m+1}'(\pi)
-
\frac{\pi^2}{4m(m+1)}g_m'(\pi).
\end{equation}
Set
\[
a_m:=g_{m+2}'(\pi),
\qquad m\ge1.
\]
Taking $m=1,2$ in \eqref{eq:g-rec} gives the stated initial values, and for $m\ge3$ it gives the recurrence in the theorem.
Since $a_m=g_{m+2}'(\pi)$, reindexing \eqref{eq:Jpi} gives
\[
J_n(\pi)
=
2-\sum_{m=1}^{n-2}a_m.
\]
Substitution into \eqref{eq:volE-J} and \eqref{eq:volI-J} gives the asserted formulas.
\end{proof}

For large $n$, $\mathcal I(C_n)$ provides a very tight volume approximation of $\mathrm{Cut}(C_n)$, as evidenced by the following corollary.

\begin{crl}
For $n\ge4$,
\[
1\le
\frac{\Vol(\mathcal I(C_n))}{\Vol(\mathrm{Cut}(C_n))}
\le
1+\frac{2^{n-1}}{n!-2^{n-1}}
\le
\frac{\Vol(\mathcal I(C_3))}{\Vol(\mathrm{Cut}(C_3))}
=
\frac{3\pi^2}{16}
\approx 1.8506.
\]
Moreover, the upper bound is strictly decreasing in $n$ and converges to $1$.
\end{crl}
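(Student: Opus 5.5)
The plan is to squeeze $\Vol(\mathcal I(C_n))$ between $\Vol(\mathrm{Cut}(C_n))$ and $1$, and then insert the exact value of $\Vol(\mathrm{Cut}(C_n))$ from Proposition \ref{cycle}.

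\textbf{Lower bound.} The inequality $1\le \Vol(\mathcal I(C_n))/\Vol(\mathrm{Cut}(C_n))$ follows from the containment $\CG\subset\IG$ recalled in Section \ref{def}, applied with $G=C_n$.

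\textbf{Upper bound.} I will show $\Vol(\mathcal I(C_n))\le 1$, using either of two routes.
\begin{itemize}
\item Every $y\in\mathcal E(C_n)$ has $|y_{ij}|\le 1$, because it comes from a positive semidefinite matrix with unit diagonal. Under the map (\ref{E2I}) this gives $\mathcal I(C_n)\subseteq[0,1]^n$.
\item Alternatively, use (\ref{eq:volI-J}) from the proof of Theorem \ref{cycleelliptope}, namely $\Vol(\mathcal I(C_n))=1-\tfrac12 J_n(\pi)$. Here $J_n(\pi)\ge 0$, since on the domain $u_i\ge0$, $\sum u_i\le\pi$ each $u_i$ lies in $[0,\pi]$, so every factor $\sin u_i$ is nonnegative.
\end{itemize}
Dividing $\Vol(\mathcal I(C_n))\le 1$ by $\Vol(\mathrm{Cut}(C_n))=1-2^{n-1}/n!$ gives
\[
\frac{\Vol(\mathcal I(C_n))}{\Vol(\mathrm{Cut}(C_n))}\le \frac{n!}{n!-2^{n-1}}=1+\frac{2^{n-1}}{n!-2^{n-1}}.
\]
This step needs $n!>2^{n-1}$, which holds for $n\ge3$.

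\textbf{Comparison with $n=3$.} Write $q_n=2^{n-1}/n!$, so the bound is $f(q_n)$ with $f(q)=q/(1-q)$.
\begin{itemize}
\item $q_{n+1}/q_n=2/(n+1)<1$ for $n\ge2$, so $q_n$ is strictly decreasing and tends to $0$.
\item $f$ is strictly increasing on $[0,1)$.
\end{itemize}
Hence the upper bound $1+f(q_n)$ is strictly decreasing in $n$ and converges to $1$. Its largest value in the range $n\ge4$ is at $n=4$, where it equals $1+8/16=3/2$.

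For the right-hand constant, Theorem \ref{cycleelliptope} gives $\Vol(\mathcal E(C_3))=\pi^2/2$. By (\ref{VE2I}), $\Vol(\mathcal I(C_3))=\pi^2/16$. Proposition \ref{cycle} gives $\Vol(\mathrm{Cut}(C_3))=1/3$. The ratio is therefore $3\pi^2/16\approx1.8506$, which exceeds $3/2$.

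\textbf{Main obstacle.} The only step that needs any thought is justifying $\Vol(\mathcal I(C_n))\le1$. It is settled by the box containment $\mathcal I(C_n)\subseteq[0,1]^n$; everything else is arithmetic.
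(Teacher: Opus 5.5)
Your proposal is correct and follows the paper's proof. It uses the same three ingredients: the containment $\mathrm{Cut}(C_n)\subseteq\mathcal I(C_n)$ for the lower bound, the box containment $\mathcal I(C_n)\subseteq[0,1]^n$ together with Proposition~\ref{cycle} for the upper bound, and Theorem~\ref{cycleelliptope} for the $n=3$ value. You also make explicit two checks the paper leaves to ``one easily checks'': the monotonicity argument ($q_n=2^{n-1}/n!$ decreasing, $q\mapsto q/(1-q)$ increasing) and the comparison $3/2<3\pi^2/16$ at $n=4$.
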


\begin{proof}
The lower bound follows from $\mathrm{Cut}(C_n)\subseteq\mathcal I(C_n)$.
For $n=3$, Theorem \ref{cycleelliptope} and Proposition \ref{cycle} give
\[
\frac{\Vol(\mathcal I(C_3))}{\Vol(\mathrm{Cut}(C_3))}
=
\frac{3\pi^2}{16}.
\]
For $n\ge4$, $\mathcal I(C_n)\subseteq[0,1]^n$ and Proposition \ref{cycle} give
\[
\frac{\Vol(\mathcal I(C_n))}{\Vol(\mathrm{Cut}(C_n))}
\le
1+\frac{2^{n-1}}{n!-2^{n-1}}
<
\frac{3\pi^2}{16}.
\]
One easily checks that the explicit upper bound decreases monotonically to $1$.
\end{proof}
Together with the product formulas above, Corollary 3.10 gives analogous volume-ratio bounds for cacti and necklaces.


\section{Asymptotic volume of the elliptope}
\label{asymptotic}

The volume of $\En$ was given recursively by Joe \cite{Joe2006} as follows:
$$
\Vol(\mathcal{E}_2) = 2,
\Vol(\En) = \Vol(\mathcal{E}_{n-1}) 2^{(n-1)^2} ( B(n/2, n/2))^{n-1}, n \ge 3,
$$
where $B(a,b) = \Gamma (a) \Gamma (b) / \Gamma (a+b)$,
$a,b > 0$, is the beta function, and $\Gamma$ denotes Euler's gamma function.

The objective of this section is to derive precise asymptotics for $\Vol(\En)$ as $n \to \infty$.

\begin{thm}
    {\label{logvolume} }
There exist finite constants $a$ and $b$ such that
$$
a \le \log (\Vol(\En)) - \log (v_n) \le b,
$$
where
\begin{align*}
&\log (v_n) =\\
&-\frac {n^2 \log (n)}{4} 
+ n^2 \left( \frac{1 } {8} + \frac{\log (\sqrt{2 \pi} )}{2 }\right)
+ \frac{ n \log ( n) }{4}
- n \left(\frac{1}{4} + \frac{\log (\sqrt{2 \pi} )}{2 }\right)
- \frac {\log (n)}{24}.\\
\end{align*}
\end{thm}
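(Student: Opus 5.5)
The plan is to unroll Joe's recursion and evaluate the resulting sum term by term with Stirling-type expansions, keeping every contribution that is not $O(1)$. Iterating the recursion gives
\[
\log \Vol(\En)=\log 2+\sum_{k=3}^{n}\Bigl[(k-1)^2\log 2+(k-1)\log B(k/2,k/2)\Bigr].
\]
I will not apply Stirling to $\Gamma(k/2)^2/\Gamma(k)$ directly. Instead I will first use the duplication formula $\Gamma(k)=2^{k-1}\pi^{-1/2}\Gamma(k/2)\Gamma((k+1)/2)$ to write
\[
B(k/2,k/2)=2^{1-k}\sqrt{\pi}\,\frac{\Gamma(k/2)}{\Gamma(k/2+1/2)} .
\]
Then $(k-1)\log B$ contains $-(k-1)^2\log 2$, which cancels the $(k-1)^2\log 2$ term exactly. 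Each summand therefore reduces to
\[
(k-1)\Bigl[\tfrac12\log \pi+\log\tfrac{\Gamma(k/2)}{\Gamma(k/2+1/2)}\Bigr].
\]

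Next I will use the standard asymptotic expansion of the gamma ratio, with $x=k/2$:
\[
\log\frac{\Gamma(x)}{\Gamma(x+1/2)}=-\tfrac12\log x+\frac{1}{8x}+\frac{c}{x^{2}}+O(x^{-3}).
\]
Here $c$ is an explicit constant (it can be read off from Stirling's series, but only its existence matters for what follows). Substituting, each summand becomes
\[
(k-1)\Bigl[\tfrac12\log(2\pi)-\tfrac12\log k+\frac{1}{4k}+\frac{4c}{k^2}+O(k^{-3})\Bigr].
\]
The summand now splits into four pieces:
\begin{itemize}
\item[(i)] $\tfrac12\log(2\pi)\sum(k-1)=\tfrac{n^2}{4}\log(2\pi)+O(n)$. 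Done exactly, this yields the $n^2\log(\sqrt{2\pi})/2$ and $-n\log(\sqrt{2\pi})/2$ terms up to $O(1)$.
\item[(ii)] $-\tfrac12\sum_{k\le n}(k-1)\log k=-\tfrac12\sum k\log k+\tfrac12\log n!$.
\item[(iii)] $\sum (k-1)/(4k)=n/4-\tfrac14\log n+O(1)$.
\item[(iv)] $\sum (k-1)\bigl(4c/k^2+O(k^{-3})\bigr)=4c\log n+O(1)$.
\end{itemize}

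For (ii) I will invoke the Glaisher--Kinkelin asymptotics
\[
\sum_{k\le n}k\log k=\frac{n^2\log n}{2}-\frac{n^2}{4}+\frac{n\log n}{2}+\frac{\log n}{12}+\log A+o(1),
\]
together with Stirling's formula for $\log n!$. Together these produce the leading terms $-\tfrac{n^2\log n}{4}+\tfrac{n^2}{8}$ and the $\tfrac{n\log n}{4}$ term, plus linear and $\log n$ corrections. Collecting all linear terms should give $-n\bigl(\tfrac14+\tfrac{\log\sqrt{2\pi}}{2}\bigr)$. The claim then follows with $a,b$ absorbing the finitely many small-$k$ discrepancies and the convergent tails.

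The main obstacle is the $\log n$ coefficient $-\tfrac1{24}$. It collects contributions from four places: the $\tfrac{1}{12}\log n$ in Glaisher's formula (scaled by $-\tfrac12$), the $\tfrac14\log n$ from $\tfrac12\log n!$, the $-\tfrac14\log n$ from (iii), and the $4c\log n$ from (iv). So the second-order constant $c$ in the gamma-ratio expansion must be computed correctly, and I will do that step carefully. I will cross-check the final bookkeeping numerically against the exact values of $\Vol(\En)$ from Joe's recursion for moderate $n$, confirming that the difference $\log\Vol(\En)-\log v_n$ stabilizes.
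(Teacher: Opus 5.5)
Your proof is correct, and the one computation you defer comes out exactly as required: $c=0$. Stirling's series gives $\log\Gamma(x+\frac12)=x\log(x+\frac12)-x-\frac12+\frac12\log 2\pi+\frac{1}{12(x+1/2)}+O(x^{-3})$. In this expansion the $x^{-2}$ contributions $+\frac{1}{24x^2}$ (from $x\log(1+\frac{1}{2x})$) and $-\frac{1}{24x^2}$ (from $\frac{1}{12(x+1/2)}$) cancel. The series for $\log\Gamma(x)$ contains only odd powers of $1/x$. Equivalently, the coefficient is a multiple of $B_3(\frac12)-B_3(0)=0$. So (iv) is $O(1)$, and the $\log n$ coefficient is $-\frac{1}{24}+\frac14-\frac14=-\frac{1}{24}$, with the linear and quadratic terms matching $\log v_n$ as you indicate. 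Your parenthetical that only the existence of $c$ matters is inaccurate. The $\log n$ coefficient is $-\frac{1}{24}+4c$, so any $c\neq 0$ would make the difference unbounded; you correct this yourself at the end.

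Your route differs from the paper's in how the gamma factors are handled. The paper does not use the duplication formula. It applies Stirling's series with explicit Bernoulli remainders separately to $\log\Gamma(\frac{j+1}{2})$ and $\log\Gamma(j+1)$, with $j=k-1$. As a result, $j^2\log 2$ cancels against $j^2\log\frac{j+1}{2}$ only after expansion, leaving $j^2\log(1+1/j)-j$, which is then Taylor-expanded with remainder. The sum $\sum j\log j$ is handled through the Barnes $G$-function expansion, which is the same Glaisher--Kinkelin asymptotics you invoke. The leftover terms are bounded using harmonic numbers and values of $\zeta$, giving upper and lower bounds that differ by a constant plus $O(1/n)$.

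Your use of Legendre's duplication formula makes the cancellation of the quadratic $\log 2$ term exact before any asymptotics. It reduces every summand to one classical gamma ratio, so the bookkeeping is much shorter and the source of each term in $\log v_n$ is transparent. The paper's version is more laborious, but it carries explicit remainder terms through every step rather than relying on an asymptotic expansion of the ratio.
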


\begin{proof}
$$
\log (\Vol(\En)) = \log (2) + \sum_{j=2}^{n-1} \left(
      j^2 \log(2) + 2j \log ( \Gamma ((j+1)/2 ) )
      - j \log ( \Gamma (j+1) )
\right).
$$
From Stirling's series (see, e.g., Whittaker and Watson \cite{WW1963} and Alzer \cite{Alzer1997}), we retain that
for a fixed integer $N \ge 1$, 
$$
\log (\Gamma (x)) = (x-1/2) \log (x) - x + (1/2) \log (2 \pi ) + \sum_{k=1}^{N-1} \frac { B_{2k} } { 2k (2k-1) x^{2k-1}} + R_N,
$$
where
$$
R_N = \theta \frac { B_{2N} } { 2N (2N-1) x^{2N-1}},
$$
for some $\theta \in [0,1]$, where different occurrences of
$\theta$ may denote different values,
and 
$B_{2k}$ is the $2k$-th Bernoulli number. For our computations, we recall that $B_0 = 1$, $B_2 = 1/6$, $B_4 = -1/30$, and $B_6 = 1/42$. 
For example,
$$
\log (\Gamma (x)) = (x-1/2) \log (x) - x + (1/2) \log (2 \pi ) + \frac { 1 } { 12 x} - \frac { 1 } { 360 x^3}
+ \frac { \theta } { 1260 x^5}.
$$
Using this representation and $\Gamma (x+1) = x \Gamma (x)$, we obtain
\begin{align*}
&\log (\Vol(\En)) \\
&= \log (2) + 
\sum_{j=2}^{n-1} 
\left(
      j^2 \log(2) 
      + 
       j^2 \log \left(\frac{j+1}{2}\right) - j(j+1) + j \log (2 \pi ) \right)\\
     &+ 
\sum_{j=2}^{n-1} 
\left(  
        \frac { j } { 3 (j+1)} - \frac { 2j } { 45 (j+1)^3}
+ \frac { \theta 16 j } { 315 (j+1)^5}\right) \\
&+ 
\sum_{j=2}^{n-1} 
\left(
      - j (j+1/2) \log (j) + j^2 - \frac{j}{2} \log (2 \pi ) - \frac { 1 } { 12 } + \frac { 1 } { 360 j^2}
- \frac { \theta } { 1260 j^4}
\right) \\
&= \log (2) + 
\sum_{j=2}^{n-1} 
\left(
       j^2 \log (1+1/j) - j + \frac{j}{2} \log (2 \pi ) + \frac { j } { 3 (j+1)}\right)\\
       &+ 
\sum_{j=2}^{n-1} 
\left(
       - \frac { 2j } { 45 (j+1)^3} 
+ \frac { \theta 16  } { 315 (j+1)^4}
- \frac{j}{2} \log (j)   - \frac { 1 } { 12 } + \frac { 1 } { 360 j^2}
- \frac { \theta } { 1260 j^4}
\right)\\
&= \log (2) + 
\sum_{j=2}^{n-1} 
\left(
       j^2 \log (1+1/j) - j + \frac{j}{2} \log (2 \pi ) + \frac { 1 } { 4 } \right)\\
       &+ 
\sum_{j=2}^{n-1} 
\left( - \frac { 1 } { 3 (j+1)}- \frac { 2} { 45 (j+1)^2} + \frac { 2} { 45 (j+1)^3}
+ \frac { \theta 16  } { 315 (j+1)^4}
- \frac{j}{2} \log (j)    + \frac { 1 } { 360 j^2}
- \frac { \theta } { 1260 j^4}
\right).
\end{align*}
The last expression is not more than
\begin{align*}
    &\log 2 + \sum_{j=2}^{n-1}
          \left( - \frac{j}{2} \log (j) + j^2 \log (1+1/j) - j + \frac{j}{2} \log (2 \pi ) + \frac { 1 } { 4 } - \frac { 1 } { 3 (j+1)} \right) \\
          &
          - \frac { 2} { 45 } (\zeta(2)-5/4) + O(1/n)
          + \frac { 2} { 45 } ( \zeta(3)- 9/8) 
          + \frac { 16  } { 315 } (\zeta(4) - 17/16)
          + \frac { 1 } { 360 } (\zeta(2)-1) \\
          &\le \log 2 
          - \frac{1}{2} \sum_{j=2}^{n-1} j \log (j) 
          + \sum_{j=2}^{n-1} ( j^2 \log (1+1/j)  - j ) \\
          &
          + \frac {n(n-1) -2} {2}  \log (\sqrt{2 \pi} ) 
          + \frac { n-2 } { 4 } 
          - \frac { H_n - 3/2 } { 3 }
          - \frac { 2(\zeta(2)-5/4)} { 45 }  + O(1/n) \\
          &
          + \frac { 2( \zeta(3)- 9/8) } { 45 } 
          + \frac { 16 (\zeta(4) - 17/16)  } { 315 } 
          + \frac { \zeta(2)-1 } { 360 } ,\\
\end{align*}
where $H_n=\sum_{k=1}^n 1/k$ is the $n$-th harmonic number and
$\zeta(s) = \sum_{k=1}^\infty \frac{1}{k^s}$ is Riemann's zeta function.
Similarly, $\log (\Vol(\En))$ is at least
\begin{align*}
&\log (2) + 
\sum_{j=2}^{n-1} 
\left(
       j^2 \log (1+1/j) - j 
       + \frac{j}{2} \log (2 \pi ) + \frac { 1 } { 4 } 
       - \frac { 1 } { 3 (j+1)}
       - \frac { 2} { 45 (j+1)^2} 
       + \frac { 2} { 45 (j+1)^3} \right)
       \\
       &+ 
\sum_{j=2}^{n-1} 
\left(
        - \frac{j}{2} \log (j)
        + \frac { 1 } { 360 j^2}
        - \frac { 1 } { 1260 j^4}
\right)\\
&\ge\log (2) 
-\sum_{j=2}^{n-1} \frac{j}{2} \log (j)
+ \sum_{j=2}^{n-1} \left( j^2 \log (1+1/j) - j \right) 
+ \frac {n(n-1) -2} {2}  \log (\sqrt{2 \pi })
\\
&\qquad+ \frac { n-2 } { 4 } 
- \frac { H_n - 3/2 } { 3 }
       - \frac { 2 (\zeta(2)-5/4)} { 45 } 
       + \frac { 2(\zeta(3)-9/8)} { 45 }\\
       &\qquad
        + \frac { \zeta(2)-1 } { 360 }
        - \frac { \zeta(4)-1 } { 1260 }
        - O(1/n^2).
\\
\end{align*}
The difference between the upper and lower bounds is a constant plus $ O (1/n)$. We are left with the two summation signs in the bounds.
By Taylor's series with remainder, for some $\theta \in (0,1)$,
we note that
\begin{align*}
j^2 \log (1+1/j) - j
&= j^2 \left(
\frac{1}{j}-\frac{1}{2j^2}+\frac{1}{3j^3}
-\frac{1}{4(j+\theta)^4}\right)-j\\
&= -\frac{1}{2}+\frac{1}{3j}
-\frac{j^2}{4(j+\theta)^4}.
\end{align*}
Therefore,
$$
\sum_{j=2}^{n-1} (j^2 \log (1+1/j) - j) 
$$
is not more than
$$
-\frac{n-2}{2}
+ \frac{H_{n-1}-1}{3},
$$
and is at least
$$
-\frac{n-2}{2}
+ \frac{H_{n-1}-1}{3}
- \frac{\zeta(2)-1}{4}.
$$
The Barnes G-function 
$$
G(n)  
= \prod_{i=1}^{n-1} \Gamma (i)
= \prod_{i=1}^{n-2} i!
= \prod_{i=1}^{n-2} i^{n-1-i}
$$
will be helpful.  In particular,
$$
\log (G(n+1))
= \sum_{j=1}^{n-1} (n-j) \log (j)
= n \log (\Gamma (n)) - \sum_{j=1}^{n-1} j \log (j).
$$
We recall an asymptotic expansion of the Barnes G-function \cite{Nemes,Ferreira}. 
$$
\log (G(n+1))
= 
\frac{n^2}{4} 
+ n \log ( \Gamma (n+1)) 
- \frac {n(n+1) \log (n)}{2}
- \frac {\log (n)}{12}
- \log (A)
+ O \left(\frac{1}{n^2}\right),
$$
where the constant $A$ is the Glaisher--Kinkelin constant, and can be expressed as 
$
\log (A) = 
\frac{1}{12} 
- \zeta'(-1)
= 0.2487\ldots.
$
Thus, 
$$
\sum_{j=1}^{n-1} j \log (j)
= 
\frac {n(n+1) \log (n)}{2}
-\frac{n^2}{4} 
- n \log ( n) 
+ \frac {\log (n)}{12}
+ \log (A)
+ O \left(\frac{1}{n^2}\right).
$$
Collecting all this shows that $\log (\Vol(\En))$ differs by a bounded term from
$$
-\frac {n^2 \log (n)}{4} 
+ n^2 \left( \frac{1 } {8} + \frac{\log (\sqrt{2 \pi} )}{2 }\right)
+ \frac{ n \log ( n) }{4}
- n \left(\frac{1}{4} + \frac{\log (\sqrt{2 \pi} )}{2 }\right)
- \frac {\log (n)}{24}.
$$
\end{proof}

\begin{crl}
\label{cor:IR}
\[
\lim_{n\to\infty}
\left(
\frac{1}{n^2}\log\frac{\Vol(\In)}{\Vol(\Rn)}
+\frac14\log n
\right)
=
\frac18+\frac14\log\frac{\pi}{2}.
\]
\end{crl}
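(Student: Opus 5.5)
The plan is to reduce the corollary to Theorem \ref{logvolume} together with Theorem \ref{KLS}, and to observe that only the $n^2$-order terms survive division by $n^2$.

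First, I convert from $\En$ to $\In$ using (\ref{VE2I}) with $G=K_n$:
\[
\log\Vol(\In)=\log\Vol(\En)-\binom n2\log 2 .
\]

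Second, I bound the rooted metric polytope. Theorem \ref{KLS}, applied with $n-1$ in place of $n$, gives $\Vol(\Rn)=2^{n-1}(n-1)!/(2n-2)!$. By crude Stirling bounds, $\log\Vol(\Rn)=O(n\log n)$, since it equals $-(n-1)\log n+O(n)$. Hence $\log\Vol(\Rn)/n^2\to0$.

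Third, I substitute the estimate from Theorem \ref{logvolume}. Since $\log\Vol(\En)=\log v_n+O(1)$, and every term of $\log v_n$ beyond the first two is $O(n\log n)$, we get
\[
\frac1{n^2}\log\frac{\Vol(\In)}{\Vol(\Rn)}
=-\frac{\log n}{4}+\frac18+\frac{\log\sqrt{2\pi}}{2}-\frac{\log 2}{2}+O\!\left(\frac{\log n}{n}\right).
\]
Here the $-\frac{\log 2}{2}$ comes from $\binom n2\log2/n^2=\frac{\log 2}{2}+O(1/n)$.

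Fourth, I simplify the constant:
\[
\frac{\log\sqrt{2\pi}}{2}-\frac{\log2}{2}=\frac14\log(2\pi)-\frac14\log4=\frac14\log\frac{\pi}{2}.
\]
Adding $\frac14\log n$ to both sides and letting $n\to\infty$ then yields the stated limit $\frac18+\frac14\log\frac{\pi}{2}$.

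There is no real obstacle. The only points needing care are the index shift in Theorem \ref{KLS}, which is stated for $\mathrm{RMet}_{n+1}$, and checking that the bounded error $[a,b]$ from Theorem \ref{logvolume} and all $O(n\log n)$ terms vanish after division by $n^2$.
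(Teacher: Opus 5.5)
Your proof is correct and takes the same route as the paper, whose proof simply cites Theorem~\ref{logvolume}, (\ref{VE2I}), Theorem~\ref{KLS} and Stirling's approximation. You supply the details the paper omits: the $-\binom n2\log 2$ shift from $\En$ to $\In$, the $O(n\log n)$ size of $\log\Vol(\Rn)$, and the simplification $\frac14\log(2\pi)-\frac14\log 4=\frac14\log\frac{\pi}{2}$.
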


\begin{proof}
The result follows from Theorem \ref{logvolume}, (\ref{VE2I}), Theorem \ref{KLS}, and Stirling's approximation.
\end{proof}

\section{Asymptotic volumes of the metric and cut polytopes}
\label{asymptotic2}

For general graphs, we have the following result of
Deza and Laurent \cite{DL97} (Theorem 31.2.1).

\begin{thm}
\label{DLangles}
Let $v_1,\ldots,v_n$ be $n$ unit vectors in $\mathbb R^m$
$(n\ge3,\ m\ge1)$.
Let $a=(a_{ij})_{1\le i<j\le n}\in\mathbb R^{\binom n2}$ and $a_0\in\mathbb R$ such that the
inequality $a^T x\le a_0$ is valid for the cut polytope $\mathrm{Cut}_n$. Then
\[
\sum_{1\le i<j\le n}
a_{ij}\arccos(v_i^Tv_j)
\le \pi a_0.
\]
\end{thm}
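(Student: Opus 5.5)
The proof is Goemans–Williamson random-hyperplane rounding combined with linearity of expectation; no earlier volume result is needed.

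First, reduce to $m\ge2$ by embedding $\mathbb R^m$ into a larger space; inner products are unchanged, so this loses nothing.

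Next, draw a random vector $r$ uniformly from the unit sphere, or equivalently a standard Gaussian vector, which is rotation invariant. Define the random set $S=\{i\in[n]: r^Tv_i\ge 0\}$. Two normalizations need a word.

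\begin{itemize}
\item The cut vector $x^S$ depends only on the unordered partition $\{S,[n]\setminus S\}$. We may therefore replace $S$ by its complement if necessary so that $n\notin S$, matching the convention $S\subseteq[n-1]$ of Definition \ref{Cdef}.
\item With probability one, no $r^Tv_i$ equals $0$.
\end{itemize}

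Thus $x^S$ is almost surely a vertex of $\Cn$. Since $a^Tx\le a_0$ is valid for $\Cn$, we get $a^Tx^S\le a_0$ pointwise. Taking expectations gives
\[
\sum_{i<j} a_{ij}\Pr[x^S_{ij}=1]\le a_0 .
\]

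The key step is the classical identity $\Pr[x^S_{ij}=1]=\arccos(v_i^Tv_j)/\pi$. The event $x^S_{ij}=1$ is the event that $v_i$ and $v_j$ lie on opposite sides of the hyperplane $r^\perp$. To compute its probability, project $r$ onto the plane spanned by $v_i,v_j$. This is the only step needing care, and it is also where the degenerate case $v_i=\pm v_j$ must be handled, since then the span is a line. When the span is a plane, rotation invariance makes the normalized projection uniform on the circle. The separating directions then form two arcs, each of angle $\theta_{ij}=\arccos(v_i^Tv_j)$, out of total angle $2\pi$, so the probability is $\theta_{ij}/\pi$. If $v_i=v_j$, the two vectors are never separated and $\theta_{ij}=0$. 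If $v_i=-v_j$, they are always separated and $\theta_{ij}=\pi$. The formula holds in every case.

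Substituting the identity and multiplying by $\pi$ gives $\sum a_{ij}\arccos(v_i^Tv_j)\le \pi a_0$, as claimed. The hypothesis $n\ge3$ plays no role beyond the setup. I expect no real obstacle; the only care needed is measure-zero ties and the complementation convention for cut vectors.
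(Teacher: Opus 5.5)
Your argument is correct. The paper gives no proof of its own for this theorem; it quotes it from Deza and Laurent (Theorem 31.2.1), and your random-hyperplane expectation argument, built on the Goemans--Williamson identity that $i$ and $j$ are separated with probability $\arccos(v_i^Tv_j)/\pi$, is the standard proof of that cited result. One small tightening: $x^S$ is a cut vector for every choice of $r$, not just almost surely, so the null set of ties matters only in the separation-probability computation (e.g.\ when $v_i=-v_j$).
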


\begin{prp}
\label{angularinclusion}
$
\Phi_{K_n}(\In)\subseteq\Cn.
$
\end{prp}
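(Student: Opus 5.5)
Given $x\in\In$, I will exhibit it as a Gram configuration and then feed that configuration into Theorem \ref{DLangles}. Set $y_{ij}=1-2x_{ij}$. By the definition of $\In$ via (\ref{E2I}), the vector $y$ lies in $\En$, so there is a PSD matrix $Y$ with unit diagonal and off-diagonal entries $y_{ij}$. Factor $Y=V^TV$ and let $v_1,\ldots,v_n$ be the columns of $V$; these are unit vectors in some $\mathbb R^m$ with $v_i^Tv_j=y_{ij}$. By (\ref{angularmap}) the image point $a=\Phi_{K_n}(x)$ then has coordinates
\[
a_{ij}=\tfrac1\pi\arccos(1-2x_{ij})=\tfrac1\pi\arccos(v_i^Tv_j).
\]

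Next, $\Cn$ is a polytope, so by Minkowski--Weyl it equals the intersection of the halfspaces $\{c^Tz\le c_0\}$ over all inequalities valid for $\Cn$; the facet inequalities already suffice. It is therefore enough to verify $c^Ta\le c_0$ for every valid inequality $c^Tz\le c_0$. Theorem \ref{DLangles}, applied with coefficient vector $c$ and the unit vectors $v_i$, gives
\[
\sum_{i<j}c_{ij}\arccos(v_i^Tv_j)\le \pi c_0 .
\]
Dividing by $\pi$ yields exactly $c^Ta\le c_0$, so $a\in\Cn$.

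For $n\le2$ Theorem \ref{DLangles} is not stated, but the claim is trivial there: $\Phi_{K_2}$ maps $[0,1]$ onto $[0,1]=\mathrm{Cut}_2$.

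There is no real obstacle once Theorem \ref{DLangles} is available. The only points to check are the Gram factorization, which exists because $Y$ is PSD (via Cholesky or a spectral decomposition), and the identification of $\arccos(1-2x_{ij})$ with the angle between $v_i$ and $v_j$. The latter holds because $1-2x_{ij}\in[-1,1]$ for every $x\in\In$, so $\arccos$ is defined and $\Phi_{K_n}$ is well defined on all of $\In$.
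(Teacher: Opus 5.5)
Your proof is correct and follows the paper's argument essentially verbatim: you realize the entries $1-2x_{ij}$ as inner products of unit vectors via the PSD matrix $Y$, apply Theorem~\ref{DLangles} to every valid inequality for $\Cn$, and conclude that $\Phi_{K_n}(x)\in\Cn$. Your remark on the trivial case $n\le 2$ is a harmless addition; the paper omits it because it assumes $n\ge 3$ throughout.
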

\begin{proof}

Let $x\in\In$. By the definition of $\In$, there is a symmetric
positive semidefinite matrix $Y=[y_{ij}]$ such that
$
y_{ii}=1,\qquad y_{ij}=1-2x_{ij}
$
for $1\le i<j\le n$. Hence $Y$ is the Gram matrix of unit vectors
$v_1,\ldots,v_n$, and therefore
$
1-2x_{ij}=v_i^Tv_j
$
for $1\le i<j\le n$.
Let $a^Tz\le a_0$ be any valid inequality for $\Cn$.
By Theorem \ref{DLangles},
\[
\sum_{i<j}a_{ij}\arccos(v_i^Tv_j)\le\pi a_0.
\]
Since
$
\Phi_{K_n}(x)_{ij}
=
\frac1\pi\arccos(v_i^Tv_j),
$
we obtain
$
a^T\Phi_{K_n}(x)\le a_0.
$
Thus $\Phi_{K_n}(x)$ satisfies every valid inequality for $\Cn$,
and hence $\Phi_{K_n}(x)\in\Cn$.
\end{proof}

\begin{thm}
\label{cutelliptope}
\[
\left(\frac{2}{\pi}\right)^{\binom{n}{2}}\Vol(\In)
\le
\Vol(\Cn)
\le
\Vol(\In).
\]
In particular,
\[
\Vol(\Cn)=\exp\left(-\frac{n^2\log n}{4}+O(n^2)\right).
\]
\end{thm}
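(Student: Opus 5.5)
The upper bound needs no new ingredient. It was noted in Section~\ref{def} that $\Cn\subseteq\In$, and volume is monotone under inclusion, so $\Vol(\Cn)\le\Vol(\In)$ follows at once.

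For the lower bound I will use Proposition~\ref{angularinclusion}, which gives $\Phi_{K_n}(\In)\subseteq\Cn$, and compare $\Vol(\Phi_{K_n}(\In))$ with $\Vol(\In)$ by the change of variables formula. The map $\Phi_{K_n}$ acts coordinatewise by $\phi(t)=\frac1\pi\arccos(1-2t)$ for $t\in(0,1)$. Its derivative is
\[
\phi'(t)=\frac{1}{\pi}\cdot\frac{2}{\sqrt{1-(1-2t)^2}}=\frac{1}{\pi\sqrt{t(1-t)}}.
\]
Since $t(1-t)\le 1/4$, we get $\phi'(t)\ge 2/\pi$ everywhere on $(0,1)$.

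$\Phi_{K_n}$ is injective and $C^1$ on the interior of the cube $(0,1)^{\binom n2}$, and its Jacobian there is diagonal with determinant $\prod_{i<j}\phi'(x_{ij})\ge(2/\pi)^{\binom n2}$. The set $\In\subseteq[0,1]^{\binom n2}$ is convex and full-dimensional. Its intersection with the boundary of the cube is contained in finitely many hyperplanes, so it has measure zero and can be discarded. Then
\[
\Vol(\Cn)\ \ge\ \Vol(\Phi_{K_n}(\In))=\int_{\In}\prod_{i<j}\phi'(x_{ij})\,dx\ \ge\ \left(\frac2\pi\right)^{\binom n2}\Vol(\In).
\]

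The only delicate point is justifying the change of variables. The derivative $\phi'$ blows up at $0$ and $1$, but it is continuous and injective on the open cube, and the boundary is null, so the standard theorem applies, for example by exhausting the open cube with compact subcubes and using monotone convergence.

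For the asymptotic statement, take logarithms in the two-sided bound:
\[
\log\Vol(\In)-\binom n2\log\frac\pi2\ \le\ \log\Vol(\Cn)\ \le\ \log\Vol(\In).
\]
By \eqref{VE2I}, $\log\Vol(\In)=\log\Vol(\En)-\binom n2\log 2$. Theorem~\ref{logvolume} gives $\log\Vol(\En)=-\frac{n^2\log n}{4}+O(n^2)$. The correction terms $\binom n2\log 2$ and $\binom n2\log(\pi/2)$ are both $O(n^2)$, so $\log\Vol(\Cn)=-\frac{n^2\log n}{4}+O(n^2)$.
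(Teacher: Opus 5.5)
Your proof is correct and follows the paper's argument essentially line for line. The upper bound comes from $\Cn\subseteq\In$, the lower bound from Proposition~\ref{angularinclusion} combined with change of variables and the pointwise Jacobian bound $\frac{1}{\pi\sqrt{t(1-t)}}\ge\frac{2}{\pi}$, and the asymptotics from Theorem~\ref{logvolume} and \eqref{VE2I}; your extra care in discarding the null set $\In\cap\partial[0,1]^{\binom n2}$, where the derivative blows up, is a detail the paper leaves implicit.
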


\begin{proof}
Proposition \ref{angularinclusion} implies that 
$
\Vol(\Phi_{K_n}(\In))
\le
\Vol(\Cn).
$
Since $\Phi_{K_n}$ acts coordinatewise and
\[
\frac{d}{dt}
\left(
\frac{1}{\pi}\arccos(1-2t)
\right)
=
\frac{1}{\pi\sqrt{t(1-t)}},
\]
its Jacobian determinant is
\[
\prod_{i<j}
\frac{1}{\pi\sqrt{x_{ij}(1-x_{ij})}}.
\]
Since $\Phi_{K_n}$ is injective, the change-of-variables formula gives
\[
\Vol(\Phi_{K_n}(\In))
=
\int_{\In}
\prod_{i<j}
\frac{1}{\pi\sqrt{x_{ij}(1-x_{ij})}}\,dx.
\]
Since
\[
\frac{1}{\pi\sqrt{t(1-t)}}\ge\frac{2}{\pi}
\qquad (0<t<1),
\]
we obtain
\[
\Vol(\Phi_{K_n}(\In))
\ge
\left(\frac{2}{\pi}\right)^{\binom n2}
\Vol(\In),
\]
which proves the lower bound. The upper bound follows from $\Cn\subset\In$.
Finally, Theorem \ref{logvolume} and (\ref{VE2I}) give
\[
\Vol(\In)=\exp\left(-\frac{n^2\log n}{4}+O(n^2)\right),
\]
and the bounds above imply the last assertion.
\end{proof}


\begin{thm}
\label{MetRecurrence}
\[
\left(\frac{1}{3}\right)^{\binom{n}{2}}
\le
\Vol(\Mn)
\le
\Vol(\mathrm{Met}_{n-1})^{n/(n-2)}.
\]
\end{thm}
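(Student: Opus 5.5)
For the lower bound we will exhibit an explicit box inside $\Mn$. Take the cube $Q=[1/3,2/3]^{\binom n2}$, which has volume $(1/3)^{\binom n2}$. We check each family of triangle inequalities in (\ref{triangle}) on $Q$. For any triple, $x_{ij}\le 2/3 = 1/3+1/3\le x_{ik}+x_{jk}$. The same argument gives the other two sign patterns. For the perimeter inequality, $x_{ij}+x_{ik}+x_{jk}\le 3\cdot 2/3=2$. Hence $Q\subseteq\Mn$, and the lower bound follows.

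For the upper bound we will use a Loomis--Whitney/Shearer-type projection inequality. Coordinates of $\mathbb R^{\binom n2}$ are indexed by the edges of $K_n$. For each vertex $v\in[n]$, let $\pi_v$ be the coordinate projection onto the edges of $K_n-v$, a copy of $K_{n-1}$ with $\binom{n-1}{2}$ coordinates.

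Every edge $\{i,j\}$ lies in $K_n-v$ for exactly $n-2$ choices of $v$. So the family of coordinate sets $\{E(K_n-v)\}_{v\in[n]}$ covers each coordinate exactly $n-2$ times. The uniform-cover form of the Loomis--Whitney inequality (Bollob\'as--Thomason, or Shearer's inequality via entropy) then gives, for any compact set $K$,
\[
\Vol(K)^{\,n-2}\le\prod_{v=1}^n \Vol_{\binom{n-1}2}(\pi_v K).
\]

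It remains to identify $\pi_v(\Mn)$. First, $\pi_v(\Mn)\subseteq \mathrm{Met}_{n-1}$, because the triangle inequalities on triples avoiding $v$ are among those defining $\Mn$. This inclusion is all the upper bound needs. In fact equality holds: given $x\in\mathrm{Met}_{n-1}$, we can extend it to $\Mn$ by duplicating a vertex ($x_{vu}=x_{wu}$, $x_{vw}=0$), though we do not need this.

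Therefore $\Vol(\Mn)^{n-2}\le \Vol(\mathrm{Met}_{n-1})^n$, which is the claimed bound $\Vol(\Mn)\le\Vol(\mathrm{Met}_{n-1})^{n/(n-2)}$.

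The main point requiring care is invoking the correct version of the projection inequality: the Bollob\'as--Thomason box theorem for $k$-uniform covers, with exponent $1/k$, where here $k=n-2$. A less careful argument would only use Loomis--Whitney for codimension-one projections. If an appeal to that theorem is not acceptable, the fallback is the entropy proof. Take $X$ uniform on $\Mn$. Shearer's lemma gives $(n-2)h(X)\le\sum_v h(\pi_v X)$. Each $\pi_vX$ is supported in $\mathrm{Met}_{n-1}$, so $h(\pi_vX)\le\log\Vol(\mathrm{Met}_{n-1})$. Since $h(X)=\log\Vol(\Mn)$, this gives the same bound.
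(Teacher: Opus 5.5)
Your proof is correct and is essentially the paper's. You use the same cube $[1/3,2/3]^{\binom n2}$ for the lower bound, and your Bollob\'as--Thomason/Shearer uniform-cover inequality with $k=n-2$ is the set form of the generalized H\"older (Finner) inequality that the paper applies to the indicators of the cylinders $\mathrm{Met}_n^k\cong\mathrm{Met}_{n-1}\times[0,1]^{n-1}$, whose intersection is $\Mn$. (One small point: for $n=3$ the inclusion $\pi_v(\mathrm{Met}_3)\subseteq\mathrm{Met}_2=[0,1]$ comes from the box bounds implied by the triangle inequalities, not from triangles avoiding $v$, but that case is trivial anyway.)
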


\begin{proof}
The lower bound follows from the observation that $\Mn$ contains the cube
$[1/3,2/3]^{\binom{n}{2}}$, since the $4\binom{n}{3}$ triangle inequalities
are satisfied for $x_{ij},x_{ik},x_{jk}\in[1/3,2/3]$.

The upper bound is immediate for $n=3$. For $n\ge4$, consider, for
$k=1,\dots,n$,
\[
\mathrm{Met}_n^k
=
\left\{
x\in[0,1]^{\binom n2}:
(x_{ij})_{\substack{i<j\\ i,j\ne k}}\in\mathrm{Met}_{n-1}
\right\}.
\]
Up to relabelling of the coordinates,
$\mathrm{Met}_n^k$ is congruent to
$\mathrm{Met}_{n-1}\times[0,1]^{n-1}$, and hence
\[
\Vol(\mathrm{Met}_n^k)=\Vol(\mathrm{Met}_{n-1}).
\]
Moreover,
\[
\Mn=\bigcap_{k=1}^n\mathrm{Met}_n^k.
\]
Let $\mathbf 1_k$ denote the indicator of $\mathrm{Met}_n^k$.
The function $\mathbf 1_k$ depends only on the coordinates $x_{ij}$ with
$i,j\ne k$. Thus each coordinate $x_{ij}$ occurs in precisely the $n-2$
functions $\mathbf 1_k$ with $k\notin\{i,j\}$.
Applying the generalized H\"older inequality for product spaces with weight
$1/(n-2)$ to each of the functions $\mathbf 1_k$ gives
\begin{align*}
\Vol(\Mn)
&=
\int_{[0,1]^{\binom n2}}
\prod_{k=1}^n \mathbf 1_k(x)\,dx\\
&\le
\prod_{k=1}^n
\left(
\int_{[0,1]^{\binom n2}}
\mathbf 1_k(x)^{\,n-2}\,dx
\right)^{1/(n-2)}\\
&=
\prod_{k=1}^n
\left(
\int_{[0,1]^{\binom n2}}
\mathbf 1_k(x)\,dx
\right)^{1/(n-2)}\\
&=
\prod_{k=1}^n
\Vol(\mathrm{Met}_n^k)^{1/(n-2)}\\
&=
\Vol(\mathrm{Met}_{n-1})^{n/(n-2)}.
\end{align*}
\end{proof}

\begin{crl}
\label{cor:MetAsymptotic}
There exists a constant $\lambda_{\mathrm{Met}}$
in $[0.7027,\log 3]$ such that
\[
-\frac{\log\Vol(\Mn)}{\binom{n}{2}}
\longrightarrow
\lambda_{\mathrm{Met}}.
\]
Equivalently,
\[
\log\Vol(\Mn)
=
-\lambda_{\mathrm{Met}}\binom{n}{2}
+o(n^2).
\]
\end{crl}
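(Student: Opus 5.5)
Set $\mu_n := -\log\Vol(\Mn)/\binom{n}{2}$. I will show that $\mu_n$ is monotone and bounded, so it has a limit, and then read off the numerical range from Table \ref{tab:volumes}. Everything needed is contained in Theorem \ref{MetRecurrence}.

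\emph{Monotonicity.} Take logarithms in the upper bound of Theorem \ref{MetRecurrence}:
\[
\log\Vol(\Mn)\le \frac{n}{n-2}\log\Vol(\mathrm{Met}_{n-1}).
\]
Since $\binom{n}{2}=\frac{n}{n-2}\binom{n-1}{2}$, dividing both sides by $-\binom{n}{2}$ reverses the inequality and gives
\[
\mu_n=-\frac{\log\Vol(\Mn)}{\binom n2}\ \ge\ -\frac{\log\Vol(\mathrm{Met}_{n-1})}{\binom{n-1}{2}}=\mu_{n-1}.
\]
So $\mu_n$ is nondecreasing for $n\ge4$. The exponents match exactly, and that is what makes this work: the H\"older exponent $n/(n-2)$ is precisely the ratio $\binom n2/\binom{n-1}2$.

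\emph{Boundedness.} The lower bound of Theorem \ref{MetRecurrence} gives $\Vol(\Mn)\ge 3^{-\binom n2}$, so $\mu_n\le\log 3$ for every $n$. A nondecreasing sequence bounded above converges, so $\mu_n\to\lambda_{\mathrm{Met}}\le\log3$. The equivalent form $\log\Vol(\Mn)=-\lambda_{\mathrm{Met}}\binom n2+o(n^2)$ is just a restatement of this convergence.

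\emph{Numerical lower bound.} By monotonicity, $\lambda_{\mathrm{Met}}\ge\mu_n$ for any fixed $n$, so it suffices to take the largest $n$ for which the volume is known exactly. This is $n=7$ in Table \ref{tab:volumes}, where
\[
\Vol(\mathrm{Met}_7)=\frac{3586321206047}{9206753675223604500}\approx 3.895\times10^{-7}.
\]
Then $\mu_7=-\log(3.895\times10^{-7})/21\approx 14.758/21\approx0.7027$. The only real work is this arithmetic: I must evaluate the logarithm carefully enough to certify $\mu_7\ge0.7027$ rather than a nearby value. If the check falls slightly short, I would instead state the constant that the rational value actually certifies. No other step poses an obstacle.
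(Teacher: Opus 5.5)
Your proposal is correct and follows the paper's proof. Both get monotonicity of $\mu_n$ from the upper bound of Theorem \ref{MetRecurrence}, using $\binom n2=\frac{n}{n-2}\binom{n-1}2$, and the bound $\mu_n\le\log 3$ from the cube $[1/3,2/3]^{\binom n2}$. Both take the numerical lower bound from the exact $n=7$ volume, and the arithmetic does certify it: $\mu_7\approx 14.7583/21\approx 0.70278>0.7027$.
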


\begin{proof}
Theorem \ref{MetRecurrence} implies that
\[
-\frac{\log\Vol(\Mn)}{\binom{n}{2}}
\]
is nondecreasing, while the lower bound
$\Vol(\Mn)\ge 3^{-\binom{n}{2}}$
shows that it is bounded above by $\log 3$.
Hence the sequence converges to a constant
$\lambda_{\mathrm{Met}}\le\log 3$.
Using the exact value for $n=7$ gives
\[
\lambda_{\mathrm{Met}}
\ge
-\frac{1}{21}
\log\left(
\frac{3586321206047}{9206753675223604500}
\right)
>0.7027.
\]
\end{proof}

\begin{crl}
\label{cor:IMet}
\[
\frac{\Vol(\In)}{\Vol(\Mn)}
=
\exp\left(
-\frac{n^2\log n}{4}
+\Theta(n^2)
\right),
\]
and in particular
\[
\frac{\Vol(\In)}{\Vol(\Mn)}
\longrightarrow0.
\]
\end{crl}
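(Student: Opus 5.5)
The plan is to combine the two asymptotic estimates already established: Theorem \ref{logvolume} for the elliptope and Corollary \ref{cor:MetAsymptotic} for the metric polytope. I will take logarithms of the ratio and write
\[
\log\frac{\Vol(\In)}{\Vol(\Mn)}=\log\Vol(\In)-\log\Vol(\Mn),
\]
then bound each term separately.

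For the numerator, Theorem \ref{logvolume} gives $\log\Vol(\En)=\log v_n+O(1)$. Combining this with (\ref{VE2I}), that is $\Vol(\In)=2^{-\binom n2}\Vol(\En)$, yields
\[
\log\Vol(\In)=-\frac{n^2\log n}{4}+n^2\left(\frac18+\frac{\log\sqrt{2\pi}}{2}\right)-\binom n2\log 2+O(n\log n).
\]
The $n^2$ coefficient is an explicit constant, so this is $-\tfrac14 n^2\log n+\Theta(n^2)$. I only need it to be $O(n^2)$ in absolute value, and Theorem \ref{cutelliptope} already records this form.

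For the denominator, I will avoid the $o(n^2)$ from Corollary \ref{cor:MetAsymptotic} and instead use the two-sided bound $0\le-\log\Vol(\Mn)\le\binom n2\log 3$. The upper bound is Theorem \ref{MetRecurrence}. The lower bound holds because $\Mn\subseteq[0,1]^{\binom n2}$: the triangle inequalities force $x_{ij}\ge0$ (add two of them) and $x_{ij}\le1$ (combine one with the perimeter inequality). Hence $-\log\Vol(\Mn)=O(n^2)$.

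Adding the two estimates, the ratio's logarithm is $-\tfrac14 n^2\log n$ plus a quantity bounded in absolute value by $Cn^2$. This gives the stated form, with $\Theta(n^2)$ read as a term of order at most $n^2$. If a genuine $\Theta$ is wanted, I would refine the correction using $\lambda_{\mathrm{Met}}$ from Corollary \ref{cor:MetAsymptotic}. The correction is then $\bigl(\tfrac18+\tfrac12\log\sqrt{2\pi}-\tfrac12\log2+\tfrac12\lambda_{\mathrm{Met}}\bigr)n^2+o(n^2)$. Since $\lambda_{\mathrm{Met}}>0.7027$, the constant is about $0.125+0.459-0.347+0.351>0$, so it is strictly positive and the term is truly $\Theta(n^2)$.

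Since $-\tfrac14 n^2\log n$ dominates any $O(n^2)$ term, the exponent tends to $-\infty$ and the ratio tends to $0$. The only step needing care is the bookkeeping of the $n^2$ constants, in particular the $2^{-\binom n2}$ factor from (\ref{VE2I}). That bookkeeping is routine.
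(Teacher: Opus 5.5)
Your proposal is correct and follows the paper's proof, which likewise just combines Theorem~\ref{logvolume} with (\ref{VE2I}) for $\In$ and Corollary~\ref{cor:MetAsymptotic} for $\Mn$. Your $n^2$ coefficient $\frac18+\frac14\log\frac{\pi}{2}+\frac12\lambda_{\mathrm{Met}}>0$ is right, and in fact your crude bounds already give a genuine $\Theta(n^2)$ without $\lambda_{\mathrm{Met}}$: the elliptope contributes $(\frac18+\frac14\log\frac{\pi}{2})n^2+O(n\log n)$ with a positive coefficient, while $0\le-\log\Vol(\Mn)\le\binom n2\log 3$.
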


\begin{proof}
The result follows from Corollary \ref{cor:MetAsymptotic},
Theorem \ref{logvolume}, and (\ref{VE2I}).
\end{proof}


\section{Volume estimates from Volesti}
\label{estimation}
 
In this section we use the software Volesti\footnote{\url{https://volesti.readthedocs.io}}
to get estimates for volumes that are too hard to compute exactly.
In particular, we will get estimates for the volume of
metric polytopes to
compare with the known volumes of the elliptopes.
All runs reported here were performed using the R interface and default Volesti settings.
Exact polytope volume computation is \#P-hard, but Volesti appears to give
good estimates for the volumes of the metric and cut polytopes.

In Table \ref{tab:avolumes} we first compare the estimates with the exact
computed values, as shown in Table \ref{tab:volumes}, converted to floating point.
The estimates are the mean value over 20 runs using
the Cooling Convex Bodies (CB) \cite{CEF23} algorithm in Volesti. For the hardest problem, $\mathrm{Cut}_7$,
the 20 runs completed in about 5 minutes. We remark that the estimates for the cut
polytope use its V-representation as input, whereas for the metric polytope,
the input is its H-representation.
This table suggests that for small $n$ the estimates for the volumes of the cut and metric polytopes are
very close to the exact values. 
\begin{table}[htb]
\caption{Exact volumes vs estimates (*) for small $n$}
\label{tab:avolumes}
\setlength{\extrarowheight}{5pt}
\begin{tabular}{||c||c|c||c|c||c|c||}
\hline
n & $\Vol(\Cn)$ & $\Vol(\Cn^*)$ & $\Vol(\Mn)$& $\Vol(\Mn^*)$& $\frac{\Vol(\Mn)}{\Vol(\Cn)}$&$\frac{\Vol(\Mn^*)}{\Vol(\Cn^*)}$\\
 & & (est.) & & (est.) & & (est.) \\
\hline
3 &  0.33     &   0.33  & 0.33 & 0.33    & 1  & 0.99 \\
4 & 4.44e-02  & 4.38e-02 &  4.44e-02    & 4.44e-02    &  1 & 1.01 \\
5 & 2.25e-03   &  2.36e-03    & 2.35e-03  &  2.40e-03     & 1.04     & 1.01 \\
6 &  4.10e-05  &  4.16e-05 & 4.86e-05 & 4.99e-05  &    1.19& 1.20 \\
7 &  2.52e-07&  2.46e-07 &  3.89e-07 & 3.90e-07 &      1.55& 1.58 \\
\hline
\end{tabular}
\end{table}

For larger values of $n$, we cannot compute the exact volume for $\Mn$, so
we rely on the estimates provided by Volesti. The results and
statistics provided by Volesti are given in the Appendix. 
The mean estimates place the crossover near $n=13$.
Again, we
used the CB algorithm, but obtained similar results using the other
two supplied algorithms, Cooling Gaussian (CG) and Sequence of Balls (SOB).
Since the dimension of the convex bodies is $d=\binom{n}{2}$, it seems reasonable to
fit a quadratic curve to the logarithm of the mean estimate
of $\Vol(\Mn)$.
Indeed, the fit is excellent, as shown in Figure \ref{IMn}, which was generated using the model
function in R.
The estimated parabola is $y=-0.50n^2 + 1.62n - 1.66$, or in terms
of the dimension $d$, $y=-d + O(\sqrt(d))$.
It is shown in dashed blue in the figure; solid
blue shows the log of the computed values from
Table \ref{tab:volumes}.
The asymptotic comparison with $\In$ is given analytically
by Corollary \ref{cor:IMet}; the estimates are useful for estimating
$\lambda_{\mathrm{Met}}$ and locating the crossover.
For comparison, Figure \ref{IMn} also
contains log plots of $\Rn$ and $\In$ based
on the exact formulae given earlier.
The parabola is also a near-perfect fit
to the data for $3 \le n \le 7$, for which we have exact results.
The consecutive sample means at $n=20$ and $n=21$ violate
the recurrence bound in Theorem \ref{MetRecurrence} for $n=21$.
The $n=20$ estimates have unusually large dispersion, and values
within the two observed ranges remain compatible with the exact inequality.

\begin{figure} [H]
\centering
\includegraphics[width=12cm]{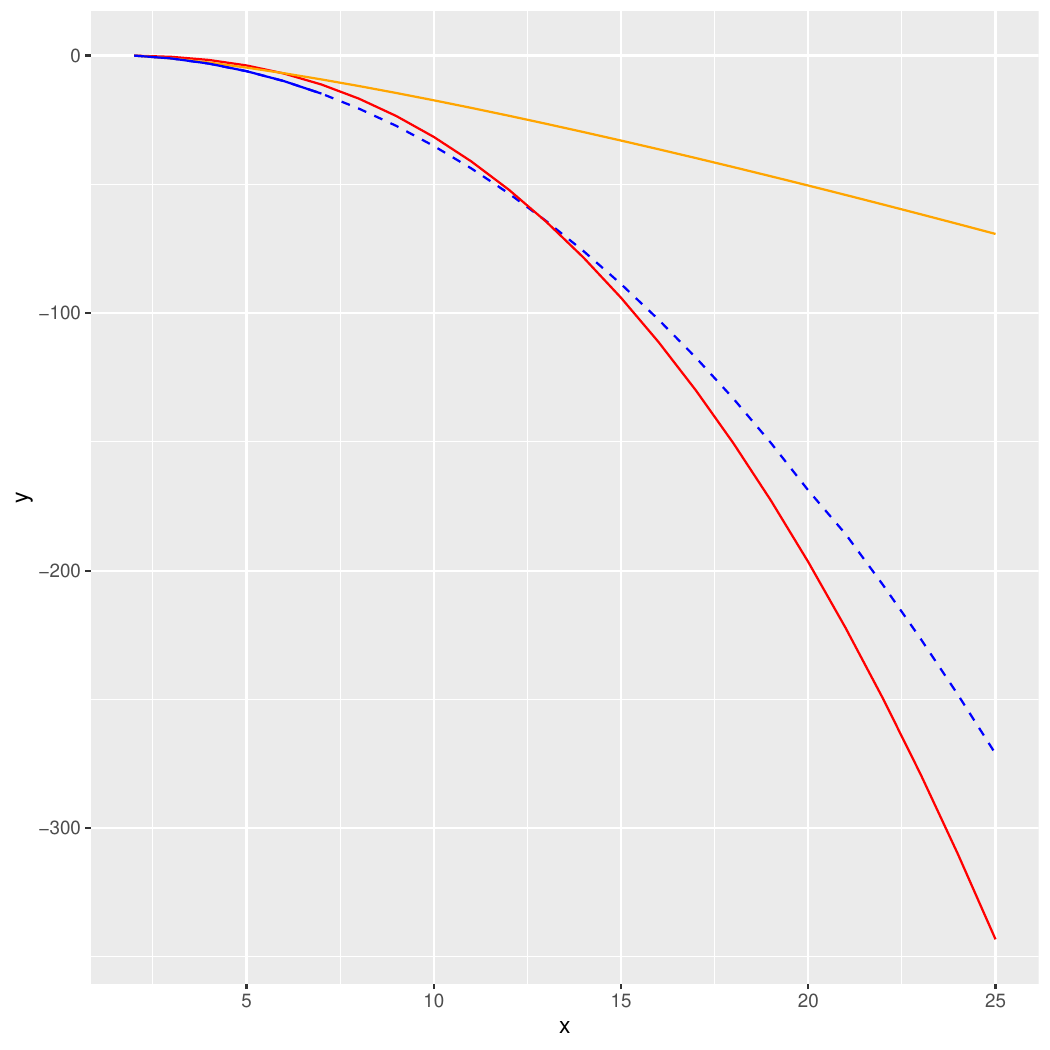}
\caption{Log plots of \textcolor{orange}{$\Vol(\Rn)$}, \textcolor{blue}{$\Vol(\Mn)$ (estimate dashed)} and $\textcolor{red}{\Vol(\In)}$} 
\label{IMn}
\end{figure}

 
\section{Conclusion and open problems}

In this paper, we use volume bounds to
give a measure of how well metric polyhedra approximate
the cut polytope compared to that provided by the elliptope.

First, in Section \ref{exact}, we derived exact formulae for
$\Vol(\CG)$ for various sparse graphs.
Since these graphs have no
$K_5$ minor this is also the same as $\Vol(\MG)$.
Theorem \ref{cycleelliptope} gives an exact computation of the volume of the elliptope of every cycle.
For cycles, the resulting volume bound is extremely tight: the volume ratio of $\mathcal I(C_n)$ to
$\mathrm{Cut}(C_n)$ converges rapidly to one. By the product formulas, similar conclusions hold for the
related cacti and necklaces considered above. In contrast, Table \ref{tab:volumes}
shows that for complete graphs of small order the elliptope gives a relatively weak volume approximation.

In Section \ref{asymptotic}, we gave a tight bound
on the volume of the elliptope of the complete graph $K_n$.
Exact computation for small $n$ showed that the metric polytope
gives a much closer approximation to the cut polytope than the
elliptope does. The rooted metric polytope is already weaker than
the elliptope for $n=6,7$, and asymptotically it is much weaker.
In Section \ref{asymptotic2} we proved that
$
\Vol(\Mn)=\exp(-\Theta(n^2))
$
and
$
\frac{\Vol(\In)}{\Vol(\Mn)}\longrightarrow0.
$
Thus for large $n$ the elliptope gives a much tighter volume
relaxation than the metric polytope. The volume estimates place
the crossover near $n=13$.

\begin{problem}
Determine 
$$
\lambda_{\mathrm{Met}} = \lim_{n\to\infty}
-\frac{\log\Vol(\Mn)}{\binom{n}{2}}.
$$
\end{problem}

\begin{problem}
Determine whether 
$$
\lim_{n\to\infty}
\frac{\log\Vol(\In)-\log\Vol(\Cn)}{\binom{n}{2}}
$$
exists and, if so, determine its value which, by Theorem \ref{cutelliptope}, would lie in
$[0,\log \frac{\pi}{2}]$.
\end{problem}

Finally, we recall Lee and Skipper \cite{LS20} posed the challenge
of finding a polynomial time algorithm for finding $\Vol(\text{Cor}(G))$
for graphs $G$ with no $K_4$ minor (series-parallel graphs).
Via the covariance, this is equivalent to computing $\Vol(\mathrm{Cut}(\nabla G))$.
These are a subset of graphs with no $K_5$ minor and
so $\mathrm{Cut}(\nabla G) = \mathrm{Met}(\nabla G)$. An even bigger
challenge would be to find an algorithm for computing  $\Vol(\CG)$
for any graph $G$ with no $K_5$ minor or show that it is \#P-hard.

\newpage
\appendix
\section{Results of volume estimation for $\Vol(\Mn)$}
 \begin{table}[H]
\caption{Estimates for $\Vol(\Mn)$ vs truncated exact values of $\Vol(\In)$.}
\label{tab:mvolumes}
\setlength{\extrarowheight}{5pt}
\begin{tabular}{||c||c|c|c|c|c|c||c|c||}
\hline
 n & Min.  & 1st Qu. &   Median  &  Mean & 3rd Qu. &   Max.& $\Vol(\In)$ & $\frac{\Vol(\In)}{\mathrm{Mean}(\Mn)}$ \\
\hline
8 &9.86e-10&1.05e-09&1.13e-09&1.15e-09&1.18e-09&1.56e-09  & 5.54e-08 & 47.8 \\
9&9.58e-13&1.31e-12&1.43e-12&1.40e-12&1.52e-12&1.62e-12 & 6.41e-11 & 45.9 \\
10&5.10e-16&5.42e-16&5.77e-16&5.75e-16&5.97e-16&6.45e-16 & 1.93e-14 & 33.7 \\
11&8.08e-20&9.04e-20&9.48e-20&9.59e-20&9.98e-20&1.20e-19 & 1.44e-18 & 15.1 \\
12&4.91e-24&5.46e-24&5.82e-24&5.83e-24&6.15e-24&7.12e-24 & 2.52e-23 & 4.32 \\
13&7.10e-29&1.23e-28&1.31e-28&1.29e-28&1.38e-28&1.65e-28 & 9.89e-29 & 0.76 \\
14&8.53e-34&9.89e-34&1.05e-33&1.04e-33&1.10e-33&1.20e-33 & 8.34e-35 & 0.08\\
15&2.12e-39&2.66e-39&2.93e-39&2.99e-39&3.48e-39&3.90e-39 & 1.45e-41 & 4.85e-03\\
16&1.76e-45&2.70e-45&2.95e-45&3.04e-45&3.50e-45&4.19e-45 & 5.06e-49 & 1.66e-04 \\
17&5.76e-52&9.22e-52&1.18e-51&1.13e-51&1.33e-51&1.69e-51 & 3.40e-57 & 3.01e-06\\
18&3.82e-59& 1.25e-58&1.47e-58&1.44e-58&1.69e-58&2.18e-58&4.28e-66 &  2.97e-08 \\
19&2.37e-67&1.12e-66&4.99e-66&4.97e-66&7.98e-66&1.05e-65&9.78e-76 & 1.97e-10\\
20&4.90e-77&7.39e-75&3.41e-74&4.77e-74&8.98e-74&1.30e-73&3.95e-86& 8.28e-13 \\
21&1.61e-81&1.78e-81&1.92e-81&1.88e-81&1.96e-81&2.11e-81&2.75e-97 &1.46e-16 \\
22&4.27e-90&4.55e-90&4.81e-90&4.85e-90&5.03e-90&5.63e-90 &3.21e-109& 6.61e-20\\
23&4.13e-99&4.25e-99&4.46e-99&4.52e-99&4.67e-99&5.47e-99 &6.15e-122 &1.36e-23 \\
24 &1.31e-108&1.41e-108&1.49e-108&1.49e-108&1.55e-108&1.73e-108& 1.88e-135 & 1.26e-27  \\
25&1.59e-118&1.72e-118&1.81e-118&1.81e-118&1.85e-118&2.10e-118&9.08e-150 &5.01e-32 \\
\hline
\end{tabular}

\medskip
The results shown in Table \ref{tab:mvolumes} are based on 
20 runs of Volesti using the CB algorithm.

\end{table}
\end{document}